\newtheorem{theorem}{Theorem}[section]
\newtheorem{lemma}[theorem]{Lemma}
\newtheorem{claim}[theorem]{Claim}
\newtheorem{corollary}[theorem]{Corollary}
\theoremstyle{definition}
\newtheorem{definition}[theorem]{Definition}
\newtheorem{example}[theorem]{Example}
\newcommand{\expected}{\mathbb{E}}
\title{Fine-Grained Privacy Guarantees for Coverage Problems}
\author{Laxman Dhulipala\thanks{University of Maryland, MD, USA. Email: {\tt laxman@umd.edu}. Supported by NSF award number CNS-2317194.} \and George Z. Li\thanks{University of Maryland, MD, USA. Email: {\tt gzli929@gmail.com}. Supported by NSF award number CNS-2317194.}}
\date{}
\begin{document}

\maketitle

\begin{abstract}
    We introduce a new notion of neighboring databases for coverage problems such as Max Cover and Set Cover under differential privacy. In contrast to the standard privacy notion for these problems, which is analogous to node-privacy in graphs, our new definition gives a more fine-grained privacy guarantee, which is analogous to edge-privacy. We illustrate several scenarios of Set Cover and Max Cover where our privacy notion is desired one for the application.
    
    Our main result is an $\epsilon$-edge differentially private algorithm for Max Cover which obtains an $(1-1/e-\eta,\tilde{O}(k/\epsilon))$-approximation with high probability. Furthermore, we show that this result is nearly tight: we give a lower bound show that an additive error of $\Omega(k/\epsilon)$ is necessary under edge-differential privacy. Via group privacy properties, this implies a new algorithm for $\epsilon$-node differentially private Max Cover which obtains an $(1-1/e-\eta,\tilde{O}(fk/\epsilon))$-approximation, where $f$ is the maximum degree of an element in the set system. When $f\ll k$, this improves over the best known algorithm for Max Cover under pure (node) differential privacy, which obtains an $(1-1/e,\tilde{O}(k^2/\epsilon))$-approximation.

    We show that our techniques also apply to the Set Cover problem under differential privacy. Like previous work~\cite{GuptaLMRT10}, we consider the problem of outputting an implicit set cover solution and give a $O(\text{poly}\log{n}/\epsilon)$-approximation algorithm for the problem under $\epsilon$-edge differential privacy. Again via group privacy guarantees, this implies an $O(f\cdot\text{poly}\log{n}/\epsilon)$-approximation algorithm for the problem under $\epsilon$-node differential privacy. In particular, this is the first non-trivial approximation algorithm for outputting an implicit set cover under pure node-differential privacy. Previous work only guarantees the weaker notion of approximate node differential privacy.  
\end{abstract}

\newpage 

\section{Introduction}

Our work considers a fundamental problem in combinatorial optimization called the Max Cover problem. In this problem, we are given a universe $\mathcal{U}=\{u_1,\ldots,u_n\}$ of elements, a set system $\mathcal{S}=\{S_1,\ldots,S_m\}$, and a natural number $k\in\mathbb{N}$. We wish to choose $k$ sets $S_{i_1},\ldots,S_{i_k}$ such that the number of elements covered $|\bigcup_{j=1}^{k}S_{i_j}|$ is maximized. In many applications where the Max Cover problem appears, the input data can be sensitive information which individuals wish to keep private; as a result, we consider the Max Cover problem under the privacy model of differential privacy \cite{Dwork06}, which has become the golden standard for formal data privacy guarantees. For example, consider the following scenario:

\begin{example}
    Suppose that we are living in a global pandemic and we want to increase the accessibility of vaccines to reduce the impacts of the disease. The local government has $k$ vaccine distribution centers, and they want to know where to place them in order to best reduce infections. Suppose they have some data on which popular locations (such as stores, park, or schools) each person visits in the week. One natural objective is to choose $k$ locations such that the number of people which go to at least one of the chosen locations is maximized. This problem can be formulated as an instance of the Max Cover problem, where the universe $\mathcal{U}$ is the set of people and the set system $\mathcal{S}$ consists of subsets of people which visit each location.
\end{example}

Since some people are already vaccinated, our objective should really be to cover the maximum number of people which have not been vaccinated already. However, the individuals may wish to keep this information private (e.g., an individual may not want their friends to know that they have not been vaccinated yet). As a result, we wish to preserve privacy for whether or not an element requires coverage. This notion of privacy, which we call node-differential privacy, has been the primary model in past work \cite{GuptaLMRT10,mitrovic2017differentially}.

\begin{example}
    Given the outbreak of some infectious disease, one commonly used method to detect it is to place sensors at the entrances of popular locations (again, such as stores, parks, or schools) so that infected individuals can be detected \cite{eubank2004structural}. Since placing and monitoring these sensors can be expensive, let us suppose that the local government can afford to place $k$ sensors in the area. Given the information of the locations each person visits during the week (e.g., using past data), we wish to choose $k$ popular locations such that the maximum number of individuals are checked by some sensor set. This problem can again be formulated as an instance of the Max Cover problem, where the universe $\mathcal{U}$ is the set of individuals and set system $\mathcal{S}$ consists of the subset of people which visit each location.
\end{example}

In this example, we are in the beginning of an outbreak so none of the individuals are vaccinated and all individual require coverage. However, the individuals may wish the keep the locations which they visit private. In the language of differential privacy, they wish to guarantee that an adversary cannot detect whether or not they visited a specific location. In analogy to privacy notions on graphs, we call this notion of privacy for Max Cover edge-differential privacy. We remark that this notion of privacy makes sense in Example 1.1 as well in the setting no individuals have been vaccinated yet. Furthermore, these examples are also naturally motivated for the Set Cover problem, where we no longer have an input $k$ and instead, wish to choose to fewest number of sets so that all elements in $\mathcal{U}$ are covered.

\subsection{Our Contributions}

Our work formalizes the new notion of edge-differential privacy for the Max Cover problem (see Section \ref{sec:prelim}). As we have illustrated in the examples above, there are many applications of the Max Cover problem where our new privacy notion is the natural one.

Our main result is a novel algorithm for the Max Cover problem under $\epsilon$-edge differential privacy. The algorithm's approximation guarantee nearly match the multiplicative approximation ratios of non-private algorithms for Max Cover; we also show that the additive error of the algorithm is near optimal. Our algorithm is inspired by recent work on transforming parallel algorithms into differentially private algorithms for problems including densest subgraph and k-core decomposition \cite{DBLP:conf/focs/DhulipalaLRSSY22}. The authors observed that parallel algorithms are naturally amenable to private implementations because they have low round complexity and often use only local information. Motivated by this, we study parallel implementations of the classical greedy Max Cover algorithm and show that they can be made differentially private. Our results here corroborate the observations made in \cite{DBLP:conf/focs/DhulipalaLRSSY22}, contributing a new example of a parallel algorithm which achieves near-optimal additive error under differential privacy.

\begin{theorem}[informal]
For any $\eta>0$, there exists an $(1-1/e-\eta,\tilde{O}(k/\epsilon))$-approximation algorithm for the Max Cover problem under $\epsilon$-edge differential privacy. \label{thm:intro-max-cover}
\end{theorem}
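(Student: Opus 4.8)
The plan is to privatize a low-adaptivity (``parallel'') version of the classical threshold-greedy algorithm for Max Cover, exploiting the fact that under edge-neighboring databases the vector of marginal coverages barely changes. Write $f(S)=|\bigcup_{i\in S}S_i|$ for the coverage of a family $S$, $\mathrm{OPT}$ for the optimum over families of size $k$, $S^*$ for an optimal family, and $\Delta(i\mid S)=f(S\cup\{i\})-f(S)=|S_i\setminus\bigcup_{j\in S}S_j|$ for the marginal coverage of $S_i$. Threshold-greedy maintains a solution $S$ (initially empty) and a threshold $\tau$ that starts at $n$ and is multiplied by $(1-\eta)$ each phase; in each phase it inserts into $S$ sets with $\Delta(i\mid S)\ge\tau$, subject to $|S|\le k$, halting when $\tau<1$ or $|S|=k$, and non-privately returns a $(1-1/e-\eta)$-approximation. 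The first step is to implement this using only $R=\mathrm{poly}(\log n,\log m,1/\eta)$ adaptive evaluations of the marginal-coverage vector $(\Delta(i\mid S))_{i\in[m]}$ in total, each followed by a batch of insertions. The subtlety is the cardinality constraint: the naive rule ``insert every set above $\tau$, then truncate arbitrarily to budget'' fails (e.g.\ when many sets are near-duplicates), so the batch in each phase must be chosen --- via threshold/prefix-sampling techniques from the adaptive submodular maximization literature, using $O(\log m)$ marginal-vector evaluations per phase --- so that each inserted set still has marginal coverage $\approx(1-\eta)\tau$ \emph{at the moment it is inserted}. This also ensures the budget is exhausted only once $\tau=O(\mathrm{OPT}/k)$, since any $k$ sets each contributing $\ge\tau$ fresh elements when inserted already yield coverage $\ge k\tau$, which cannot exceed $\mathrm{OPT}$.

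The second, crucial, step is the privacy observation. Fix the current solution $S$ (it is determined by the transcript of past rounds); two edge-neighboring databases, differing in a single membership pair $(u,S_i)$, induce marginal-coverage vectors that differ in at most one coordinate, and by at most $1$ there. Hence releasing $(\Delta(i\mid S))_{i\in[m]}+(\mathrm{Lap}(1/\epsilon_0))_{i\in[m]}$ is $\epsilon_0$-edge differentially private, with no cost charged to the other $m-1$ coordinates --- exactly the place where edge-privacy beats node-privacy, since flipping one element's membership in all its sets perturbs up to $f$ coordinates. The private algorithm runs the parallel threshold-greedy above but replaces each exact marginal-vector evaluation by its Laplace-noised version, comparing against a lowered cutoff $\tau-\alpha$ with $\alpha=\Theta(\epsilon_0^{-1}\log(mR/\beta))$, so that with probability $\ge 1-\beta$ every noisy coordinate stays within $\alpha$ of the truth over the whole run. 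Taking $\epsilon_0=\epsilon/R$ and applying basic composition over the $R$ evaluations yields $\epsilon$-edge differential privacy; basic composition (hence pure DP) suffices precisely because $R$ is polylogarithmic.

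For utility, condition on the good event. A set inserted in a phase with threshold $\tau$ then has true marginal at least $\tau-\alpha\ge(1-\eta)\max_i\Delta(i\mid S)-2\alpha\ge(1-\eta)\frac{\mathrm{OPT}-f(S)}{k}-2\alpha$, using $\mathrm{OPT}-f(S)\le\sum_{i\in S^*}\Delta(i\mid S)\le k\max_i\Delta(i\mid S)$. Unrolling this recursion over the at most $k$ insertions gives $f(S)\ge(1-1/e-O(\eta))\mathrm{OPT}-O(\alpha k)$ when the budget is exhausted; when it is not, the halting condition forces $\max_i\Delta(i\mid S)<1+\alpha$, so $\mathrm{OPT}-f(S)\le k(1+\alpha)=O(\alpha k)$ as well. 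The key point is that the ``$\times k$'' loss multiplies only a \emph{single} evaluation's noise level $\alpha$, and since $R$ is polylogarithmic we have $\alpha=\tilde O(1/\epsilon)$, so the additive error is $\tilde O(k/\epsilon)$ --- in contrast with the $\tilde O(k^2/\epsilon)$ that a $k$-round sequential greedy incurs, its per-round budget being only $\epsilon/k$. Rescaling $\eta$ absorbs the $O(\eta)$ slack in the ratio.

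I expect the main obstacle to be the first step: building the parallel threshold-greedy so that it simultaneously respects the hard cardinality constraint, guarantees that each inserted set has marginal coverage $\approx(1-\eta)\tau$ when inserted (needed both for the $(1-1/e-\eta)$ ratio and to keep the budget-exhausting phase from contributing large error), and uses only polylogarithmically many adaptive marginal-vector evaluations rather than $\Theta(k)$. With such an algorithm in hand, propagating the $\pm\alpha$ slack through its analysis is routine, though one must charge the union bound over all $R$ dimension-$m$ noisy vectors correctly into $\alpha$ and re-derive the approximation ratio with the $(1-\eta)$ factor appearing once per phase rather than once per insertion.
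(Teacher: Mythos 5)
Your plan follows essentially the same route as the paper: privatize a polylog-adaptivity threshold/parallel greedy by adding Laplace noise to residual-coverage counts, exploit that an edge change perturbs the count vector by $O(1)$ in $\ell_1$, compose over only polylogarithmically many rounds so the per-round noise is $\tilde{O}(1/\epsilon)$, and let the usual greedy recursion multiply that single noise level by $k$. The paper's Algorithm \ref{alg:k-set-cover} is exactly your threshold-phase structure (geometric buckets with ratio $1/(1-\eta)$), the batch-selection rule you ask for --- insert in parallel a maximal collection of above-threshold sets so that each still has marginal $\approx(1-\eta)\tau$ at the moment it is inserted --- is precisely the MaNIS primitive of \cite{BPT11} that the paper privatizes in Section 3, and your utility bookkeeping is the paper's approximate prefix-optimality argument (Lemmas \ref{lem:prefix-optimal-max-cover} and Theorem \ref{thm:prefix-optimal}).

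The gap is that the step you yourself flag as the main obstacle is the paper's entire technical contribution, and you leave it as a citation to non-private adaptive-sampling machinery. Making such a batch-selection primitive private is not just a matter of noising one marginal-coverage vector per phase: the primitive's internal, data-dependent decisions (degree estimates, the counts of elements assigned to each candidate, the test for whether a candidate's residual degree has dropped enough to stay active) must each be noised, and one must re-prove both termination in $O(\log n)$ rounds and the near-independence/maximality guarantees with $O(\log^2 n/\epsilon)$ slack; this is the noisy potential-function analysis of Lemma \ref{lem:potential function} and Theorem \ref{thm:approximate-manis}. Note also that the paper sidesteps your truncation worry differently: it outputs an ordering and takes the first $k$ sets, with prefix-optimality absorbing the cardinality constraint, so no budget-aware prefix sampling is needed. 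Finally, your sensitivity claim (``the marginal vectors differ in at most one coordinate'') is only immediate when the flipped pair $(u,S_i)$ involves a not-yet-selected set; if $S_i$ is already in the solution, the coverage status of $u$ can differ between the two neighboring inputs, shifting the marginals of every remaining set containing $u$. The paper's privacy proofs assert $O(1)$ sensitivity just as tersely, but a careful write-up along your lines would need to handle this case explicitly rather than inherit it.
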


\begin{theorem}
    Any algorithm for the Max Cover problem which satisfies $\epsilon$-edge differential privacy must incur an additive error of $\Omega(k\log(n/k)/\epsilon)$.
\end{theorem}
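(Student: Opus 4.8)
The plan is to use a standard packing/counting lower bound argument for pure differential privacy, instantiated with a carefully chosen family of hard instances. The key fact I would rely on is the following ``packing lower bound'' consequence of $\epsilon$-DP: if there is a set of $N$ input databases, each pair of which is within distance $t$ in the neighboring relation, and a set of $N$ disjoint ``good output'' events $G_1,\dots,G_N$ such that output in $G_i$ certifies additive error less than $\alpha$ on database $i$, then any $\epsilon$-edge-DP algorithm achieving error $<\alpha$ with constant probability forces $e^{\epsilon t}\ge \Omega(N)$, i.e. $t = \Omega(\log N / \epsilon)$. So the whole game is to construct, for a given $k$ and $n$, a family of $\sim (n/k)^{\Theta(k)}$ instances that are pairwise close in the edge-neighboring relation but have wildly different optimal covers, in the sense that the optimum set of $k$ sets for one instance covers many fewer elements on any other instance.

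The concrete construction I would try: partition the universe into $k$ blocks, each of size $\approx n/k$. Within block $j$, create $n/k$ ``candidate'' sets, where set $S_{j,a}$ for $a\in[n/k]$ contains a designated large chunk of block $j$ ``rotated'' by offset $a$ — concretely, one can let each $S_{j,a}$ cover exactly the single special element $u_{j,a}$ heavily (or a $\Theta(\log(n/k))$-sized gadget), arranged so that only the set at the ``secret'' offset in each block is useful. An instance is parametrized by a vector $(a_1,\dots,a_k)\in[n/k]^k$, so there are $(n/k)^k$ instances, and $\log N = \Theta(k\log(n/k))$. Two instances differ in $O(1)$ block-coordinates worth of edges per changed coordinate; more importantly, I want any two instances to be within edge-distance $t = O(1)$ — this will require the instances to share almost all of their set contents, with the ``secret'' encoded by a tiny number of element-set incidences. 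The error gap $\alpha$ should then be $\Omega(t \cdot \log N/\epsilon) / $(packing size)\,$= \Omega(k\log(n/k)/\epsilon)$ once we balance: we want the per-instance error to be forced to scale like $\log N/\epsilon$ while $t$ stays constant, OR we amplify the gap by a multiplicative gadget (each ``useful'' incidence actually covers $\Theta(\log(n/k)/\epsilon)$ private elements, so getting one block wrong loses that much coverage). I would pursue the second route, since it cleanly turns a $\log N = \Theta(k \log(n/k))$ packing bound at edge-distance $1$ into an additive error lower bound of $\Omega(k\log(n/k)/\epsilon)$.

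The steps, in order: (1) fix the block/offset construction and define the $N=(n/k)^k$ instances explicitly, ensuring pairwise edge-distance is $O(1)$ (or handle distance scaling with the number of differing coordinates via group privacy, paying only a constant); (2) show that the optimal solution on instance $(a_1,\dots,a_k)$ picks exactly the ``secret'' set in each block and covers $\mathrm{OPT}$ elements, while \emph{any} $k$-set solution that is ``wrong'' in even one block covers at most $\mathrm{OPT} - \Omega(\log(n/k)/\epsilon)$ elements — this is the combinatorial heart; (3) define $G_{(a_1,\dots,a_k)}$ to be the event that the output is correct in every block (these are disjoint), and observe that an algorithm with additive error $< \Omega(k\log(n/k)/\epsilon)$ with probability $\ge 2/3$ must land in $G_i$ with probability $\ge 2/3$ on instance $i$; (4) apply the packing lemma across the $N$ instances to derive $2/3 \le e^{\epsilon \cdot O(1)} \cdot (1/3)$ unless $N \le O(1)$, a contradiction, so the error must be $\Omega(k\log(n/k)/\epsilon)$.

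The main obstacle I anticipate is step (1)/(2) tension: I need the instances pairwise \emph{edge-close} (few incidence changes) yet \emph{far in objective value} (changing the secret in one block must cost $\Omega(\log(n/k)/\epsilon)$ in coverage), and simultaneously I need $k$ to be a genuine bottleneck so that the algorithm cannot ``cover its way around'' a wrong block by spending a spare set. Ensuring there are no spare sets (so the budget $k$ is exactly tight and each block genuinely demands one set) while keeping the edge-distance between instances constant is the delicate part; a clean fix is to make the non-secret sets in each block strictly dominated (cover a strict subset of what the secret set covers, minus one private element), forcing the greedy/optimal choice and making any deviation lose coverage. I would also need to double-check the edge cases where $k$ is close to $n$ (so $\log(n/k)$ is small) and where $k$ is constant, but in both regimes the construction degrades gracefully to the known $\Omega(\log n/\epsilon)$-type bounds.
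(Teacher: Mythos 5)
Your route (a from-scratch packing construction) is genuinely different from the paper's, which is a three-line reduction: the hard instances in the Gupta et al.\ node-privacy lower bound have every element in exactly one set, i.e.\ $f=1$, and by Lemma~\ref{lem:equivalent-f} an $\epsilon$-edge-DP mechanism is $\epsilon$-node-DP on such instances, so their $\Omega(k\log(n/k)/\epsilon)$ bound transfers verbatim. Your sketch, however, has a quantitative gap at its core. Under edge privacy, the coverage of any fixed family of $k$ sets (and hence also $\mathrm{OPT}$) is $1$-Lipschitz in the neighboring relation: one incidence change moves every solution value by at most $1$. So if your $N$ instances are pairwise within edge-distance $O(1)$, every candidate output has nearly the same error on all of them, the ``good output'' events cannot be made disjoint at any threshold $\omega(1)$, and a constant-distance packing can never certify more than $O(1)$ additive error. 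Your preferred ``second route'' (one secret incidence being worth $\Theta(\log(n/k)/\epsilon)$ covered elements) is simply unavailable in the edge model: an incidence covers exactly one element, and making the secret set contain $d$ elements forces pairwise distance $\Theta(d)$ per differing block, i.e.\ $\Theta(k\log(n/k)/\epsilon)$ overall. The parenthetical escape hatch (``handle distance scaling\dots via group privacy, paying only a constant'') is also wrong: the group-privacy factor is $e^{\epsilon\cdot\Theta(kd)}=(n/k)^{\Theta(k)}$, the same order as the packing size $N$, so the argument lives exactly on this knife edge and must balance $d$ against $\log(n/k)/\epsilon$ with explicit constants; it is not a constant overhead.

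Even granting the correct distance scaling, your step (3) has a threshold mismatch. With $G_i$ defined as ``correct in every block'' and a per-block loss of $d=\Theta(\log(n/k)/\epsilon)$, additive error below $c\,kd$ does \emph{not} imply membership in $G_i$: an output may be wrong in up to $ck$ blocks and still meet that error bound, so the claimed implication fails. If instead you shrink the threshold to below $d$ so that the implication holds, the packing argument only certifies $\Omega(\log(n/k)/\epsilon)$, losing the factor of $k$ you need. The standard repair is to pack over a code: choose the secrets $(a_1,\dots,a_k)$ from a subset of $[n/k]^k$ of size $(n/k)^{\Omega(k)}$ with pairwise Hamming distance $\Omega(k)$, let $G_i$ be ``correct in all but a small constant fraction of blocks,'' and set the threshold to $\Theta(kd)$; together with your ``no spare sets'' fix (so a wrong block really costs $\Omega(d)$), disjointness and the implication both hold and the balance $\epsilon\cdot kd \ll k\log(n/k)$ goes through. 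This is all fixable, but it amounts to re-deriving the node-privacy construction that the paper instead imports in one step via the $f=1$ observation.
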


We also consider the related Set Cover problem. In this problem, we also have a universe of elements $\mathcal{U}$ and a set system $\mathcal{S}$. The goal is to choose the fewest number of sets in $\mathcal{S}$ such that the union of the sets covers all of the elements\footnote{Formally, our objective in Set Cover is instead to choose the fewest number of sets in $\mathcal{S}$ to cover all elements in $\bigcup_{S\in\mathcal{S}}S$. In the non-private case, we can reduce to the case where $\mathcal{U}=\bigcup_{S\in\mathcal{S}}S$ without loss of generality since we can detect when an element is not covered by any set. Since this is no longer possible with privacy, our goal will be to cover all elements is contained in a set $S\in\mathcal{S}$. When analyzing implicit set covers, this is equivalent to assuming that $\mathcal{U}=\bigcup_{S\in\mathcal{S}}S$.}. We show that our techniques for Max Cover also help in developing edge-differentially private algorithms for Set Cover. Like before, these are based on a parallel implementation of the standard sequential greedy algorithm for the problem. Since a logarithmic approximation is necessary even in the non-private case, the approximation guarantees for the problem are again nearly-optimal.

\begin{theorem}[informal]
    There exists an $O(\text{poly}\log{n}/\epsilon)$-approximation algorithm for the Set Cover problem under $\epsilon$-edge differential privacy.\label{thm:intro-set-cover}
\end{theorem}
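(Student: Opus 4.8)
The plan is to mirror the Max Cover strategy: take a low-round parallel implementation of the classical greedy Set Cover algorithm and replace each exact coverage query by a noisy one. Greedy Set Cover --- repeatedly add the set covering the most currently-uncovered elements --- is an $O(\log n)$-approximation, and a standard parallel implementation runs it in $O(\text{poly}\log n)$ rounds by bucketing the steps by coverage scale: for $\ell$ from $\lceil\log n\rceil$ down to $0$ one stays in ``level $\ell$'' while some set still covers at least $2^\ell$ uncovered elements, and each level is processed in $O(\text{poly}\log n)$ sub-rounds, each adding a (nearly) maximal family of still-useful sets simultaneously. The invariants I want to carry through are: (a) once level $\ell$ ends, every set covers fewer than $2^\ell$ uncovered elements, so by optimality the residual universe $R$ satisfies $|R| = O(\mathrm{OPT}\cdot 2^\ell)$; and (b) the near-disjointness of the family added in a sub-round keeps the number of sets added per level at $O(\mathrm{OPT})$, hence $O(\mathrm{OPT}\log n)$ in total.

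To privatize, I would arrange that every per-round decision depends on the data only through the residual-coverage vector $\big(\,|S_i\cap R|\,\big)_i$, and then in each sub-round release the noisy coverages $|S_i\cap R| + \mathrm{Lap}(\lambda)$, selecting those sets whose noisy coverage exceeds the current threshold $2^\ell$ (shifted down by a noise margin). The crucial point --- and the reason edge-privacy rather than node-privacy is the natural granularity here --- is that flipping a single incidence $(u, S_j)$ changes each coordinate $|S_i\cap R|$ by at most one: it moves $|S_j \cap R|$ by one directly, and if $S_j$ has already been added it may move $u$ into or out of $R$ and thereby shift $|S_i\cap R|$ for every $S_i \ni u$ --- but each such shift is only by one. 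A threshold / report-noisy-max style selection needs only per-coordinate sensitivity $1$, so a single sub-round is $\epsilon'$-DP with $\lambda = O(1/\epsilon')$, and no factor of $f$ appears. Adaptive composition over the $O(\text{poly}\log n)$ sub-rounds with $\epsilon' = \epsilon / O(\text{poly}\log n)$ then yields $\epsilon$-edge-DP with noise scale $\lambda = \tilde{O}(\text{poly}\log n/\epsilon)$.

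For utility, with this $\lambda$ and a union bound over all sub-rounds, with high probability every set selected at level $\ell$ truly covers at least $2^{\ell} - \tilde{O}(\text{poly}\log n/\epsilon)$ new elements, and every set of true coverage at least $2^{\ell}$ is selected at level $\ell$ or earlier; this preserves invariants (a) and (b) up to constant factors at every level with $2^\ell$ above $c\cdot\tilde{O}(\text{poly}\log n/\epsilon)$. Stopping the bucketing there, the algorithm has added $O(\mathrm{OPT}\log n)$ sets and left a residual universe of size $\tilde{O}(\mathrm{OPT}\cdot\text{poly}\log n/\epsilon)$; it then simply appends \emph{all} remaining sets to the output, using no further data access. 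The output is an implicit cover: a permutation of $\mathcal{S}$ ordered by the sub-round in which each set was added (unadded sets trailing in arbitrary order), with each element served by the earliest set in the permutation containing it. Every element of $\mathcal{U} = \bigcup_{S\in\mathcal{S}} S$ lies in some set and is therefore served, and the cost (the number of sets that serve at least one element) is at most (sets added) $+$ (residual size) $= \tilde{O}(\text{poly}\log n/\epsilon)\cdot\mathrm{OPT}$, which is the claimed approximation.

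The main obstacle is reconciling the three demands on the parallel greedy variant in the first paragraph: it must run in $\text{poly}\log n$ rounds, use $O(\mathrm{OPT}\cdot\text{poly}\log n)$ sets, and have each round's output be a function solely of the low-sensitivity coverage counts. A conflict-graph / maximal-independent-set symmetry-breaking step --- the usual way to keep the added family near-disjoint --- would violate the last demand, since one incidence can alter many conflict edges at once. The fix I would pursue is a within-level selection whose inclusion rule for $S_i$ is a threshold on $|S_i\cap R|$ together with privately generated coin flips, supported by a concentration argument showing that the selected family is near-disjoint and that $|R|$ shrinks geometrically within each level (which also bounds the sub-round count). Getting this combination right is the technical heart of the result; given it, the privacy accounting and the tail argument above are routine.
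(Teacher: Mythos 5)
Your overall architecture is the same as the paper's: a level-by-level (bucketed) parallel greedy with noisy coverage thresholds, a near-disjoint ``maximal'' selection within each level, an implicit output ordering, and a final residual argument charging $\mathrm{OPT}$ times the stopping threshold (that last step is essentially the paper's lemma that approximate prefix optimality yields an $O(\beta+\log n)$-approximate implicit cover). But there is a genuine gap exactly where you flag one: the within-level selection that is simultaneously near-disjoint, nearly maximal, privately computable, and provably terminating in $\mathrm{poly}\log n$ rounds is never constructed. This is not a routine detail to be filled in later; it is the paper's main technical contribution, a differentially private maximal-nearly-independent-set (MaNIS) subroutine adapted from Blelloch--Peng--Tangwongsan. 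Concretely: each surviving set draws a random priority $x_a$, each uncovered element votes for its highest-priority incident set, a set is selected when its noisy vote count is at least $(1-4\eta)$ times its noisy degree, covered elements are removed, and sets whose noisy residual coverage drops below $(1-\eta)$ of their noisy degree are discarded; a potential-function argument shows the edges incident to surviving sets decay geometrically, so $O(\log n)$ inner rounds suffice, and the Laplace noise margins give near-independence and maximality up to $O(\log^2 n/\epsilon)$ additive slack, which is what ultimately produces the $O(\mathrm{poly}\log n/\epsilon)$ approximation. Your sketch (``threshold on $|S_i\cap R|$ plus private coin flips plus a concentration argument'') gestures at this mechanism but specifies no rule and proves neither the geometric decay nor the near-disjointness; since you yourself call it the technical heart, the proposal does not establish the theorem.

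Separately, the privacy accounting as written is not sound. You correctly note that flipping a single incidence $(u,S_j)$ can shift $|S_i\cap R|$ for every remaining $S_i\ni u$, i.e., up to $f$ coordinates, but then assert that ``per-coordinate sensitivity $1$'' suffices because the selection is ``threshold / report-noisy-max style.'' Report-noisy-max releases a single index; your sub-round releases an above/below-threshold bit for \emph{every} set, and with independent Laplace noise such a simultaneous release is governed by the $\ell_1$ sensitivity of the released query vector (or requires a sparse-vector-style budget on the number of above-threshold reports), so exactly the factor of $f$ you are trying to avoid reappears in your accounting. The paper instead conditions on the previously released choices and argues that each per-round released quantity (the bucketing test, and inside DP-MaNIS the degree estimates, vote counts, and survival tests) is a vector query of $O(1)$ $\ell_1$ sensitivity, then composes over the $O(\log n)$ inner rounds and $O(\log n)$ outer levels with $\epsilon_0=\Theta(\epsilon/\log^2 n)$. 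Some argument of that form (or a different selection mechanism) is needed to make your sub-round $\epsilon'$-edge-DP without an $f$ dependence; per-coordinate sensitivity alone does not do it.
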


By investigating the relationship between edge-differential privacy and node-differential privacy for coverage problems, we show that our results implies new algorithms satisfying node-differential privacy as well. The guarantees under node-differential privacy depend on a parameter $f$ of the set system, which denotes the maximum degree of an element (i.e., the maximum number of sets in the set system which an element is contained in). We summarize the results below.

\begin{corollary}[informal]
    For any $\eta>0$, there exists an $(1-1/e-\eta,\tilde{O}(fk/\epsilon))$-approximation algorithm for the Max Cover problem under $\epsilon$-node differential privacy.\label{cor:node-privacy-guarantees-max-cover}
\end{corollary}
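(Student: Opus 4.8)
The plan is to derive the node-private algorithm as a black-box consequence of the edge-private algorithm of Theorem~\ref{thm:intro-max-cover}, using a group-privacy argument. The first step is to pin down the relationship between the two neighboring relations. Under edge-differential privacy, two instances are neighbors when they differ in a single element--set incidence (i.e., one element is added to or removed from one set). Under node-differential privacy, two instances are neighbors when they differ in whether a single element $u \in \mathcal{U}$ requires coverage, which is the same as adding or removing $u$ from the universe together with all of its incidences. Since every element lies in at most $f$ sets of $\mathcal{S}$, toggling $u$ changes at most $f$ incidences, so any pair of node-neighboring instances is connected by a path of at most $f$ edge-neighboring instances. In other words, node-distance $1$ implies edge-distance at most $f$.

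Given this, the second step is the standard group-privacy calculation: if $\mathcal{A}$ is $\epsilon'$-edge differentially private, then for any two instances at edge-distance $t$ and any measurable event $E$ we have $\Pr[\mathcal{A}(I) \in E] \le e^{t\epsilon'}\Pr[\mathcal{A}(I') \in E]$, hence $\mathcal{A}$ is $(f\epsilon')$-node differentially private. Running the algorithm of Theorem~\ref{thm:intro-max-cover} with privacy parameter $\epsilon' = \epsilon/f$ therefore yields an $\epsilon$-node differentially private algorithm. The accuracy guarantee is unaffected structurally --- the output is still a choice of $k$ sets and the approximation is measured on the true instance --- so the multiplicative ratio remains $1 - 1/e - \eta$ for any $\eta > 0$ and the additive error becomes $\tilde{O}(k/\epsilon') = \tilde{O}(fk/\epsilon)$, still with high probability. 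This gives the claimed $(1 - 1/e - \eta, \tilde{O}(fk/\epsilon))$-approximation.

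The main thing to be careful about --- rather than a genuine obstacle --- is the reduction between neighboring relations: one must verify against the formal definitions in Section~\ref{sec:prelim} that a node change really does correspond to at most $f$ edge changes (in particular that ``the maximum degree of an element'' is exactly the quantity controlling the edge-distance, and that the degenerate cases of adding/removing an element with fewer than $f$ incidences are handled by monotonicity of the group-privacy bound in $t$). Once the definitions are in place this is immediate, and the rest is the routine substitution $\epsilon' = \epsilon/f$ into the bounds of Theorem~\ref{thm:intro-max-cover}; the $\tilde{O}$ hides only logarithmic factors, which are unchanged by this rescaling up to the usual $\log$ terms.
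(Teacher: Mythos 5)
Your proposal is correct and follows essentially the same route as the paper: the paper's Lemma~\ref{lem:equivalent-f} restricts the instance to the private set $R$, observes that toggling one element changes at most $f$ sets (each by one element), so node-neighbors map to $f$-step edge-neighbors, and then applies group privacy (Lemma~\ref{lem:group}) and runs the edge-private algorithm of Theorem~\ref{thm:intro-max-cover} with parameter $\epsilon/f$, exactly as you do. The only detail worth adding is the paper's remark that $f$ can be computed directly since the set system is public under node-privacy, so rescaling by $f$ costs nothing in privacy.
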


\begin{corollary}[informal]
    There exists an $O(f\text{poly}\log(n)/\epsilon)$-approximation algorithm for the Set Cover problem under $\epsilon$-node differential privacy.\label{cor:node-privacy-guarantees-set-cover}
\end{corollary}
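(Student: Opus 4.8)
The plan is to derive this corollary as a black-box consequence of Theorem~\ref{thm:intro-set-cover} together with the group-privacy property of differential privacy. First I would pin down the relationship between the two neighboring relations: two set systems are node-neighbors when one is obtained from the other by inserting or deleting a single element $u$ of the universe, which—since $u$ belongs to at most $f$ sets by the definition of $f$—amounts to inserting or deleting at most $f$ element–set incidences, i.e.\ changing the database by at most $f$ edges. Hence any pair of node-neighboring set systems is at edge-distance at most $f$. In the implicit–set-cover regime, the footnote lets us assume $\mathcal{U}=\bigcup_{S\in\mathcal{S}}S$, so a node modification is exactly the removal of $u$ from every set containing it, and the bound of $f$ is tight in the worst case.

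Next I would invoke group privacy: if an algorithm $\mathcal{A}$ is $\epsilon'$-differentially private with respect to the edge-neighboring relation, then on inputs at edge-distance $t$ it satisfies the $(t\epsilon')$-indistinguishability bound; in particular $\mathcal{A}$ is $(f\epsilon')$-node differentially private. I would therefore run the $\epsilon'$-edge differentially private Set Cover algorithm of Theorem~\ref{thm:intro-set-cover} with parameter $\epsilon'=\epsilon/f$. The resulting algorithm is $\epsilon$-node differentially private, and its approximation guarantee is $O(\text{poly}\log{n}/\epsilon')=O(f\cdot\text{poly}\log{n}/\epsilon)$, since that algorithm's approximation ratio scales linearly in $1/\epsilon'$. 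This is exactly the claimed bound.

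The main thing to get right here is bookkeeping rather than a deep obstacle: one must check that the output model is preserved under the reduction—the algorithm outputs an implicit (oracle) representation of a set cover, and group privacy applies verbatim to the induced output distribution regardless of its form, so no extra argument is needed there. The one genuine subtlety is ensuring the node-neighboring relation is the intended one in the implicit setting: because we can no longer detect elements covered by no set, ``requiring coverage'' must be read as ``contained in some set of $\mathcal{S}$,'' and the reduction only works because under this reading a node modification decomposes into at most $f$ edge modifications. Once that point is fixed, the remainder is the standard $\epsilon\mapsto\epsilon/f$ rescaling.
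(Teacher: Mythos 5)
Your proposal is correct and follows essentially the same route as the paper: the paper proves this via Lemma~\ref{lem:equivalent-f} (restricting the instance to the private elements, observing that a node change decomposes into at most $f$ edge changes, and invoking group privacy, Lemma~\ref{lem:group}), then runs the edge-private Set Cover algorithm of Theorem~\ref{thm:intro-set-cover} with parameter $\epsilon/f$, exactly as you do, including the remark that the argument carries over to the implicit (ordering) output model. The only cosmetic difference is that the paper's formal node-privacy model keeps $(\mathcal{U},\mathcal{S})$ public with a private subset $R\subseteq\mathcal{U}$ of elements requiring coverage (so $f$ is publicly computable), rather than phrasing a node change as deleting an element from the set system, but the decomposition into at most $f$ edge modifications is identical.
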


For comparison, the best known $\epsilon$-node differentially private algorithms for Max Cover incurs an additive error of $\tilde{O}(k^2/\epsilon)$~\cite{GuptaLMRT10}. In the regime where $f\ll k$, our algorithms give the best known additive error for Max Cover under pure node-differential privacy\footnote{We remark that better additive guarantees are known under the weaker notion of approximate node-differential privacy.}. Furthermore, our results for Set Cover are the only known approximation guarantees for the problem under pure node-differential privacy; previous work only guarantee approximate differential privacy.

\subsection{Related Work}

The max cover problem was first considered in the differential privacy model in~\cite{GuptaLMRT10}, where they introduced the node-privacy model for covering problems. In that work, the authors gave a near-optimal algorithm for the problem under approximate differential privacy and also stated the simple $(1-1/e,\tilde{O}(k^2/\epsilon))$-approximation algorithm under pure differential privacy. Since then, there has been a large body of work focusing on the more general problem of submodular maximization in various models of computation~\cite{mitrovic2017differentially,DBLP:conf/icml/RafieyY20,DBLP:conf/aaai/ChaturvediNZ21,DBLP:conf/aistats/Perez-SalazarC21,DBLP:conf/icml/ChaturvediNN23}. When applying each of these results to the special case of max cover, all of them consider node-differential privacy. 

Differentially private set cover was also first considered in~\cite{GuptaLMRT10}, also under the node-privacy model. Since no guarantees were possible under differential privacy, they introduced the notion of outputting implicit solutions to the problem and gave $O(\text{poly}\log(n)/\epsilon)$-approximation algorithms under this model. Since then, there have been a few attempts at outputting explicit private solutions for set cover via pseudo-approximation. For instance, \cite{DBLP:conf/icalp/HsuRRU14} gave a private $O(\log{n})$-approximation algorithm for explicit set cover, but the solution may leave up to $O(\text{OPT}^2\text{poly}\log(n)/\epsilon)$ elements uncovered. Recently, \cite{DBLP:conf/ijcai/Li0V23} gave a private $O(\text{poly}\log(n)/\epsilon)$-approximation algorithm for explicit set cover, but the solution is a $\rho$-partial set cover. Again, all of these results are under node-differential privacy.

\section{Preliminaries} \label{sec:prelim}

\subsection{Differential Privacy Background}

We first give the definition of differential privacy and describe some basic differentially private mechanisms which we will use as subroutines in our work. We defer readers to \cite{dwork2014algorithmic} for the proofs of these basic results and the motivation and properties of differential privacy. 

Informally, an algorithm is differentially private if the output of the algorithm doesn't differ too much when one individual's data is added or removed from the dataset. This intuitively guarantees that an adversary can't obtain any information about a single individual's private data using the output of a differentially private algorithm. Formally, we say that two datasets are neighboring if they differ by one individual's data, and define differential privacy as follows.

\begin{definition}
    A mechanism $M:\mathcal{X}\to\mathcal{Y}$ is said to be $(\epsilon,\delta)$-differentially private if for any two neighboring inputs $X_1,X_2\in\mathcal{X}$ and any measurable subset of the output space $S\subseteq\mathcal{Y}$, we have that $$\Pr[M(X_1)\in S]\le \exp(\epsilon)\cdot\Pr[M(X_2)\in S]+\delta.$$
    When $\delta=0$, we say that $M$ is $\epsilon$-differentially private.
\end{definition}
 
In our work, we will focus on guaranteeing the stronger notion of $\epsilon$-differential privacy (also called pure differential privacy). We include the definition of $(\epsilon,\delta)$-differential privacy (also called approximate differential privacy) for the sake of discussion of previous work. Also note that the definition of differential privacy requires some suitable notion of neighboring inputs, which models the addition or removal of a single individual's data. We will define this for coverage problems in the next subsection, and this is where our main non-technical contribution lies.

Now, we state the basic properties and subroutines which we will need in the paper. The composition property says that if two mechanisms are private, running both of the mechanisms is still private (with a slightly weaker privacy parameter). The post-processing property says that any post-processing of the output of a private mechanism remains private. Finally, the group-privacy property says that any $\epsilon$-differentially private algorithm still gives some (weaker) privacy guarantees for groups of people.

\begin{lemma}
Let $\mathcal{M}_1:\mathcal{X}\to\mathcal{Y}_1$ and $\mathcal{M}_2:\mathcal{X}\to\mathcal{Y}_2$ be $\epsilon_1$- and $\epsilon_2$-differentially private mechanisms, respectively. Then $\mathcal{M}:\mathcal{X}\to\mathcal{Y}_1\times\mathcal{Y}_2$ defined by $\mathcal{M}=(\mathcal{M}_1,\mathcal{M}_2)$ is $(\epsilon_1+\epsilon_2)$-differentially private.
\label{lem:basic composition}
\end{lemma}

\begin{lemma}
Let $\mathcal{M}:\mathcal{X}\to\mathcal{Y}$ be an $\epsilon$-edge differentially private mechanism.
Let $f:\mathcal{Y}\to\mathcal{Z}$ be an arbitrary randomized mapping. Then $f\circ\mathcal{M}:\mathcal{X}\to\mathcal{Z}$ is still $\epsilon$-edge differentially private.\label{lem:post}
\end{lemma}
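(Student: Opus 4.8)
The plan is to give the standard post-processing argument. It is essentially syntactic and does not depend on the particular notion of edge-neighboring inputs (to be defined later in Section~\ref{sec:prelim}); it only invokes the defining inequality of $\epsilon$-differential privacy applied to an arbitrary pair of neighboring inputs. So the same proof in fact establishes post-processing for any reasonable neighboring relation, and I would phrase it that way.

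First I would reduce to the case of a deterministic map. An arbitrary randomized mapping $f:\mathcal{Y}\to\mathcal{Z}$ can be described by drawing a random seed $r$ from some distribution, independently of everything else, and then applying a deterministic map $f_r:\mathcal{Y}\to\mathcal{Z}$. Then for any neighboring inputs $X_1,X_2\in\mathcal{X}$ and any measurable $T\subseteq\mathcal{Z}$, we have $\Pr[(f\circ\mathcal{M})(X_1)\in T]=\expected_r\big[\Pr[f_r(\mathcal{M}(X_1))\in T]\big]$, so it suffices to bound $\Pr[f_r(\mathcal{M}(X_1))\in T]\le e^{\epsilon}\Pr[f_r(\mathcal{M}(X_2))\in T]$ for each fixed $r$ and then integrate over $r$. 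For a fixed deterministic $g=f_r$, the preimage $g^{-1}(T)=\{y\in\mathcal{Y}:g(y)\in T\}$ is a measurable subset of $\mathcal{Y}$, and applying the definition of $\epsilon$-differential privacy of $\mathcal{M}$ with output event $S=g^{-1}(T)$ gives
\[
\Pr[g(\mathcal{M}(X_1))\in T]=\Pr[\mathcal{M}(X_1)\in g^{-1}(T)]\le e^{\epsilon}\,\Pr[\mathcal{M}(X_2)\in g^{-1}(T)]=e^{\epsilon}\,\Pr[g(\mathcal{M}(X_2))\in T].
\]
Substituting back and using linearity of expectation over $r$ yields $\Pr[(f\circ\mathcal{M})(X_1)\in T]\le e^{\epsilon}\Pr[(f\circ\mathcal{M})(X_2)\in T]$, which is exactly $\epsilon$-edge differential privacy for $f\circ\mathcal{M}$.

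The only place I would be even slightly careful is the measure-theoretic bookkeeping: representing the randomized map as a measurable family $\{f_r\}$ of deterministic maps with an independent seed, checking that $g^{-1}(T)$ is measurable, and invoking Fubini/Tonelli to pull the expectation over $r$ outside. For the discrete or standard-Borel output spaces arising in our setting this is entirely routine, so I would simply note it rather than belabor it. There is no real obstacle here — this is the textbook post-processing lemma, restated for the edge-neighboring relation.
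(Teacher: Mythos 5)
Your proof is correct and is exactly the standard post-processing argument (reduce to a deterministic map via an independent seed, apply the privacy inequality to the preimage event, integrate over the seed); the paper itself gives no proof of this lemma and simply defers to the textbook treatment in \cite{dwork2014algorithmic}, which uses the same argument. Nothing further is needed.
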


\begin{lemma}
    Let $\mathcal{M}:\mathcal{X}\to\mathcal{Y}$ be an $(\epsilon,\delta)$-differentially private mechanism. If $X_1,X_2\in\mathcal{X}$ differ by at most $k$ individual's data (i.e., they are $k$-step neighbors), then we have for all measurable $Y\subseteq\mathcal{Y}$ that
    $$\Pr[\mathcal{M}(X_1)\in Y]\le\exp(k\epsilon)\cdot\Pr[\mathcal{M}(X_2)\in Y]+ke^{(k-1)\epsilon}\delta.$$\label{lem:group}
\end{lemma}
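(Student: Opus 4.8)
The plan is to prove the statement by induction on $k$ using a hybrid argument that interpolates between $X_1$ and $X_2$ through a chain of single-individual modifications. Since $X_1$ and $X_2$ differ by at most $k$ individuals' data, I would first observe that there is a sequence $X_1 = Z_0, Z_1, \ldots, Z_t = X_2$ with $t \le k$ in which each consecutive pair $Z_i, Z_{i+1}$ is neighboring (for instance, delete the individuals present only in $X_1$ one at a time, then insert those present only in $X_2$ one at a time). Because the claimed right-hand side is nondecreasing in $k$ — both $e^{k\epsilon}$ and $k e^{(k-1)\epsilon}$ increase with $k$ when $\epsilon \ge 0$, and probabilities are nonnegative — it suffices to prove the bound when the chain has length exactly $k$, i.e., when $X_1$ and $X_2$ are $k$-step neighbors.

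For the base case $k = 1$, the inequality is exactly the definition of $(\epsilon,\delta)$-differential privacy applied to the neighboring pair $X_1, X_2$. For the inductive step, assume the statement holds for $k-1$. Given $k$-step neighbors $X_1, X_2$, fix a chain $X_1 = Z_0, \ldots, Z_k = X_2$ and note that $Z_0$ and $Z_{k-1}$ are $(k-1)$-step neighbors while $Z_{k-1}$ and $Z_k = X_2$ are neighboring. Applying the inductive hypothesis to $(Z_0, Z_{k-1})$ and then the definition of privacy to $(Z_{k-1}, Z_k)$ gives, for any measurable $Y$,
\[
\Pr[\mathcal{M}(X_1) \in Y] \le e^{(k-1)\epsilon}\Pr[\mathcal{M}(Z_{k-1}) \in Y] + (k-1)e^{(k-2)\epsilon}\delta \le e^{(k-1)\epsilon}\bigl(e^{\epsilon}\Pr[\mathcal{M}(X_2) \in Y] + \delta\bigr) + (k-1)e^{(k-2)\epsilon}\delta.
\]
Expanding and using $e^{(k-2)\epsilon} \le e^{(k-1)\epsilon}$ to collect the two additive contributions yields $\Pr[\mathcal{M}(X_1) \in Y] \le e^{k\epsilon}\Pr[\mathcal{M}(X_2) \in Y] + k e^{(k-1)\epsilon}\delta$, completing the induction.

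There is essentially no deep obstacle here; the only point that needs care is the bookkeeping of the $\delta$ terms — in particular, combining $e^{(k-1)\epsilon}\delta$ and $(k-1)e^{(k-2)\epsilon}\delta$ via a uniform upper bound on the exponential factor so that they collapse to the clean expression $k e^{(k-1)\epsilon}\delta$ rather than something messier. One should also confirm that a valid neighbor chain of length at most $k$ genuinely exists under the formal model of "differ by at most $k$ individuals' data" in force (insertion/deletion versus substitution), but in all standard formulations this is immediate, and the monotonicity remark above then lets us reduce to the exactly-$k$-step case without loss of generality.
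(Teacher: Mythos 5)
Your proof is correct: the hybrid-chain induction, the base case from the $(\epsilon,\delta)$-DP definition, the bound $e^{(k-2)\epsilon}\le e^{(k-1)\epsilon}$ to collapse the $\delta$ terms to $ke^{(k-1)\epsilon}\delta$, and the monotonicity reduction from ``at most $k$'' to exactly $k$ steps are all sound. The paper gives no proof of this lemma (it is a standard group-privacy fact deferred to the Dwork--Roth reference), and your argument is precisely the standard one used there, so there is nothing to reconcile.
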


Next, we state the Laplace mechanism, which gives a simple way to approximately output the value of statistic $f:\mathcal{X}^n\to\mathbb{R}^n$ about the data while guaranteeing privacy. All of the algorithm in this paper will be analyzed as compositions of Laplace mechanisms, in addition to some post-processing.

\begin{lemma}[Laplace Mechanism]
    For a vector-valued query $f:\mathcal{X}^n\to\mathbb{R}^n$, define its $\ell_1$-sensitivity to be $\Delta_f=\max_{X_1\sim X_2}\|f(X_1)-f(X_2)\|_1$, where $X_1\sim X_2$ are neighboring inputs. For a given query $f$ and input $X\in\mathcal{X}^n$, define the Laplace mechanism to output
    $$f(X)+(Y_1,\ldots,Y_n),$$
    where $Y_1,\ldots,Y_n$ are independent and $Y_i\sim\text{Lap}(\Delta_f/\epsilon)$. This mechanism is $\epsilon$-differentially private. \label{lem:laplace mechanism}
\end{lemma}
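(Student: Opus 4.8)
The plan is to follow the standard density-comparison argument for the Laplace mechanism. First I would fix an arbitrary pair of neighboring inputs $X_1\sim X_2$ and abbreviate $a=f(X_1)$, $b=f(X_2)\in\mathbb{R}^n$, so that by the definition of $\ell_1$-sensitivity $\|a-b\|_1\le\Delta_f$. Assume for now $\Delta_f>0$; the degenerate case $\Delta_f=0$ is handled at the end, since then $f$ is constant on every neighboring pair and $M(X_1),M(X_2)$ are identically distributed. Using the density $\tfrac{1}{2b}e^{-|x|/b}$ of $\mathrm{Lap}(b)$ with $b=\Delta_f/\epsilon$, and the independence of the coordinates $Y_1,\dots,Y_n$, the output of the mechanism on input $X_1$ has joint density
\[
p_1(z)=\prod_{i=1}^n \frac{\epsilon}{2\Delta_f}\exp\!\left(-\frac{\epsilon\,|z_i-a_i|}{\Delta_f}\right),
\]
and similarly $p_2(z)$ with $a$ replaced by $b$.

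The key step is to bound the pointwise ratio $p_1(z)/p_2(z)$ uniformly over $z\in\mathbb{R}^n$. The normalizing constants cancel, giving
\[
\frac{p_1(z)}{p_2(z)}=\exp\!\left(\frac{\epsilon}{\Delta_f}\sum_{i=1}^n\big(|z_i-b_i|-|z_i-a_i|\big)\right).
\]
By the triangle inequality $|z_i-b_i|-|z_i-a_i|\le|a_i-b_i|$ for each $i$, so the exponent is at most $\frac{\epsilon}{\Delta_f}\sum_i|a_i-b_i|=\frac{\epsilon}{\Delta_f}\|a-b\|_1\le\epsilon$. Hence $p_1(z)\le e^{\epsilon}p_2(z)$ for every $z$.

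Finally, to obtain the differential-privacy inequality I would integrate this pointwise bound over an arbitrary measurable set $S\subseteq\mathbb{R}^n$: $\Pr[M(X_1)\in S]=\int_S p_1(z)\,dz\le e^{\epsilon}\int_S p_2(z)\,dz=e^{\epsilon}\Pr[M(X_2)\in S]$, which is exactly $\epsilon$-differential privacy with $\delta=0$. Since the computation is symmetric in $X_1$ and $X_2$ and holds for every neighboring pair and every measurable $S$, the proof is complete. I do not expect a genuine obstacle here — this is essentially a textbook computation — the only points needing a sentence of care are the degenerate case $\Delta_f=0$ and the (routine) remark that the sets in question are Borel so the integral manipulation is valid; one may additionally note that the pointwise bound $p_1\le e^\epsilon p_2$ is itself a stronger ``indistinguishability'' statement than the integrated conclusion.
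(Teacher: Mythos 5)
Your proof is correct and is the standard density-ratio argument for the Laplace mechanism; the paper itself gives no proof, deferring to \cite{dwork2014algorithmic}, whose proof is exactly this computation. Nothing further is needed.
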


\subsection{Edge-privacy vs. Node-privacy in Coverage Problems}

In this section, we define the new notion of differential privacy for coverage problems, where the private data is a universe $\mathcal{U}$ along with a set system $\mathcal{S}$. 

\begin{definition}
    Given a universe $\mathcal{U}$, two private inputs $\mathcal{S}_1$ and $\mathcal{S}_2$ are said to be \emph{edge-neighbors} if the set systems $\mathcal{S}_1$ and $\mathcal{S}_2$ differ by exactly one set, and that set differs by exactly one element.
\end{definition}

\begin{definition}
    Given a universe $\mathcal{U}$, let $\mathcal{X}$ denote the family of all possible set system and let $\mathcal{Y}$ be any output space.
    A mechanism $\mathcal{M}:\mathcal{X}\to\mathcal{Y}$ is said to be ($\epsilon,\delta$)-\emph{edge differentially private} if for all pairs of edge-neighbors $\mathcal{S}_1, \mathcal{S}_2$ and all measurable subsets $Y\subseteq\mathcal{Y}$, we have
    $$\Pr[\mathcal{M}(\mathcal{S}_1)\in Y]\le\exp(\epsilon)\cdot \Pr[\mathcal{M}(\mathcal{S}_2)\in Y].$$
    If $\delta=0$, we also say that $\mathcal{M}$ is $\epsilon$-\emph{edge differentially private}.
\end{definition}

To motivate the name, observe that we can we view a set system as a bipartite graph. We have a node on the left for each element in the universe $\mathcal{U}$ and a node on the right for each set in the set system $\mathcal{S}$. We then add an edge between a set $S\in\mathcal{S}$ and an element $u\in\mathcal{U}$ if $u\in S$. If we view the input as this bipartite graph representation, then the notion of edge-differential privacy for set systems coincides with standard notion of edge-differential privacy in graphs.

Next, we define the standard notion of neighboring and differentially private set systems used in previous work (see e.g.,~\cite{GuptaLMRT10}). In this notion, we have a public universe and set system $(\mathcal{U},\mathcal{S})$ along with a private subset $R\subseteq\mathcal{U}$ and our goal is to cover all of $R$ while maintaining privacy for which elements are contained in $R$. Since this is analogous to node-privacy when viewing the input as a bipartite graph, we call this node-privacy as well. Formally, it is defined as follows:

\begin{definition}
    Given a set system $(\mathcal{U},\mathcal{S})$, two private inputs $R_1$ and $R_2$ are said to be node-neighboring if they differ by exactly one element (i.e., $|R_1\Delta R_2|=1$).
\end{definition}

\begin{definition}
    Given any set system $(\mathcal{U},\mathcal{S})$, let $\mathcal{X}$ denote the family of all possible inputs $R$ and let $\mathcal{Y}$ be any output space. A mechanism $\mathcal{M}:\mathcal{X}\to\mathcal{Y}$ is said to be $(\epsilon,\delta)$-\emph{node differentially private} if for all pairs of node-neighbors $R_1, R_2$ and all measurable subsets $Y\subseteq\mathcal{Y}$, we have
    $$\Pr[\mathcal{M}(R_1)\in Y]\le\exp(\epsilon)\cdot\Pr[\mathcal{M}(R_2)\in Y].$$
    If $\delta=0$, we also say that $\mathcal{M}$ is $\epsilon$-\emph{node differentially private}.
\end{definition}

Now, we will show the relationship between edge- and node-privacy for covering problems. For ease of exposition, we will state and prove the result for the Max Cover problem. The proof for the Set Cover problem is exactly the same, except that we output an ordering over the subsets, instead of $k$ of the subsets.

\begin{lemma}
    Given an $\epsilon$-edge differentially private mechanism $\mathcal{M}'$ for Max Cover, there is an $f\epsilon$-node differentially private mechanism $\mathcal{M}$ for Max Cover with the same utility guarantees as $\mathcal{M}'$.\label{lem:equivalent-f}
\end{lemma}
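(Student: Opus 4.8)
The plan is to reduce the node-private instance to an edge-private one by ``baking'' the private set into the set system. Given the public pair $(\mathcal{U},\mathcal{S})$ with $\mathcal{S}=\{S_1,\dots,S_m\}$ and a private $R\subseteq\mathcal{U}$, I would define the restricted set system $\mathcal{S}|_R := \{S_1\cap R,\dots,S_m\cap R\}$, keeping the same index set so that $|\mathcal{S}|_R|=|\mathcal{S}|=m$, and set $\mathcal{M}(R):=\mathcal{M}'(\mathcal{S}|_R)$, where $\mathcal{M}'$ is run with the same public universe $\mathcal{U}$. Note that $R\mapsto\mathcal{S}|_R$ depends only on the public $\mathcal{S}$ and the private $R$, so $\mathcal{M}$ is just $\mathcal{M}'$ precomposed with a deterministic map (no post-processing needed, though Lemma~\ref{lem:post} would also cover it).

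For utility: the output of $\mathcal{M}'$ is a choice of $k$ indices $i_1,\dots,i_k$, and in the restricted instance the objective value of this choice is $\bigl|\bigcup_{j}(S_{i_j}\cap R)\bigr| = \bigl|(\bigcup_j S_{i_j})\cap R\bigr|$, i.e. exactly the number of elements of $R$ covered by these sets in the original instance. In particular the optimum of the restricted Max Cover instance equals $\mathrm{OPT}_R$, the node-private optimum, so any $(\alpha,\beta)$-approximation guarantee (and its success probability) achieved by $\mathcal{M}'$ carries over to $\mathcal{M}$ with identical parameters; since $n=|\mathcal{U}|$ is unchanged, any $\log n$ factors hidden in $\beta$ are preserved as well.

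For privacy: let $R_1\sim R_2$ be node-neighbors, say $R_2=R_1\cup\{u\}$ (the deletion case is symmetric). Then $\mathcal{S}|_{R_1}$ and $\mathcal{S}|_{R_2}$ agree on every index $i$ with $u\notin S_i$, and for each index $i$ with $u\in S_i$ they differ in that single set by exactly the element $u$. Since $u$ belongs to at most $f$ sets of $\mathcal{S}$, there are at most $f$ such indices, so I can interpolate a chain $\mathcal{S}|_{R_1}=\mathcal{T}_0,\mathcal{T}_1,\dots,\mathcal{T}_\ell=\mathcal{S}|_{R_2}$ with $\ell\le f$ in which each consecutive pair differs in exactly one set by exactly one element, i.e. is a pair of edge-neighbors. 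Hence $\mathcal{S}|_{R_1}$ and $\mathcal{S}|_{R_2}$ are $f$-step edge-neighbors, and applying the group-privacy bound of Lemma~\ref{lem:group} to $\mathcal{M}'$ with $\delta=0$ yields $\Pr[\mathcal{M}(R_1)\in Y]=\Pr[\mathcal{M}'(\mathcal{S}|_{R_1})\in Y]\le e^{f\epsilon}\Pr[\mathcal{M}'(\mathcal{S}|_{R_2})\in Y]=e^{f\epsilon}\Pr[\mathcal{M}(R_2)\in Y]$ for all measurable $Y$, so $\mathcal{M}$ is $f\epsilon$-node differentially private.

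The one step that deserves care is the interpolation: one must verify that inserting/deleting the single private element $u$ perturbs $\mathcal{S}|_R$ in a way that decomposes into at most $f$ legal single-element, single-set edge-neighbor moves — this is precisely where the element-degree bound $f$ enters, and it is exactly the hypothesis that makes group privacy applicable. Everything else (well-definedness of $\mathcal{S}|_R$ as an indexed family even when some restrictions coincide, and the equality of objective values) is routine. For Set Cover the same argument applies verbatim, with $\mathcal{M}'$ outputting an ordering of the sets rather than $k$ of them, as remarked before the lemma statement.
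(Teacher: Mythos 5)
Your proposal is correct and follows essentially the same route as the paper: restrict each set to the private subset $R$, run the edge-private mechanism on the restricted instance, note the objective values coincide, and invoke group privacy after observing that node-neighbors yield at most $f$-step edge-neighbors. The only cosmetic difference is that you keep the universe as $\mathcal{U}$ rather than $\mathcal{U}\cap R$, which does not affect the argument; your explicit interpolation chain is simply a more careful spelling-out of the paper's ``$f$-step neighbors'' claim.
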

\begin{proof}
    Let $(\mathcal{U},\mathcal{S},R)$ be an arbitrary input to the node-private mechanism $M$, and define the Max Cover instance $\mathcal{U}'=\mathcal{U}\cap R$ and $\mathcal{S}'=\{S\cap R:S\in\mathcal{S}\}$ as input for the edge-private algorithm $\mathcal{M}'$. The algorithm $\mathcal{M}$ constructs the Max Cover instance $(\mathcal{U}',\mathcal{S}')$ and applies $\mathcal{M}'$ on it, outputting the $k$ sets chosen by $\mathcal{M}'$. Clearly, $\mathcal{M}$ and $\mathcal{M}'$ have the outputs on any instance of Max Cover, so they give the same multiplicative and additive utility guarantees.

    Next, we turn to the privacy guarantees. Observe that adding or removing one element in $R$ changes at most $f$ sets in $\mathcal{S}'$, and each such set in $\mathcal{S}'$ changes by exactly one element. Indeed, this follows immediately by definition of the maximum degree of the set system $f$. Hence, one node-neighboring instances of $(\mathcal{U},\mathcal{S},R)$, the constructed instances $(\mathcal{U}',\mathcal{S}')$ are $f$-step neighbors. By group privacy guarantees (Lemma \ref{lem:group}), we have that the resulting mechanism $\mathcal{M}$ is $f\epsilon$-differentially private.
\end{proof}

Observe that this implies Corollaries \ref{cor:node-privacy-guarantees-max-cover} and \ref{cor:node-privacy-guarantees-set-cover} assuming the proofs of Theorems \ref{thm:intro-max-cover} and \ref{thm:intro-set-cover}. Specifically, we can compute $f$ for the input set system (which doesn't violate privacy) and run the edge-private algorithms with privacy parameter $\epsilon/f$. The remainder of the paper will focus on giving our edge-private algorithms.

\subsection{Implicit Set Cover Solutions}

It was observed in~\cite{GuptaLMRT10} that any explicit set cover solution which is output by a differentially private algorithm has trivial approximation guarantees. They showed this hardness result for node-differential privacy, and we illustrate below that the same example gives hardness for edge-differential privacy as well.

\begin{example}
    Let $M$ be an $\epsilon$-edge differentially private algorithm for set cover.
    Consider the class of all vertex cover instances where the inputs are graphs $G=(V,E)$. We claim that for any pair of vertices $(u,v)$, the mechanism $M$ must output either $u$ or $v$ (or both) as part of its output. If neither $u$ nor $v$ are part of the output, then an adversary can tell that the edge $(u,v)$ doesn't exist in the graph (since the output is guaranteed to be a vertex cover with high probability). This gives a contradiction since the information on whether $(u,v)\in E$ is private, due to group-privacy guarantees of differential privacy. Thus, we have shown that for each pair $(u,v)$, at least one of the vertices must be contained in the output of an edge-differentially private mechanism. This implies that any private mechanism must output $n-1$ nodes as the vertex cover, which is an essentially trivial algorithm.
\end{example}

Given this hardness result,~\cite{GuptaLMRT10} introduce the notion of an implicit set cover solution where the private algorithm outputs an ordering of the sets. Given the ordering, each element can identify the first set in the ordering which covers them and that set is (implicitly) included in the set cover. The goal is to output an ordering which minimizes the size of this implicit set cover, while satisfying differential privacy. This notion of outputting implicit set covers can more generally be seen as satisfying the \emph{billboard model} or \emph{joint model} of differential privacy (see~\cite{hsu-matching}). Our algorithms for edge-private set cover will also output these implicit representations of the solution.

\section{Maximal Nearly Independent Set}

The main technique introduced by \cite{BPT11} is the notion of a maximal nearly independent set (MaNIS); they show that a MaNIS can be computed in linear work and polylogarithmic depth, and then use the algorithm as a subroutine to give efficient parallel algorithms for several problems. Due to the additive error introduced to guarantee differential privacy, we again need to relax the definition to account for this error. To motivate the definition of a MaNIS, let us recall our goal: we want to choose sets which cover as many additional elements as possible, while guaranteeing parallelism. If we choose any set with utility within a $1+\eta$ factor of optimal, we will have a good approximation guarantee but may not have sufficient parallelism. If we choose all sets with utility within a $1+\eta$ factor of optimal, we have sufficient parallelism but our approximation guarantee can be arbitrarily bad. To satisfy both requirements, we wish to select a maximal collection of sets with good utility (specifically, a $(1+\eta)$-approximation) with a small overlap between sets (i.e., the sets are nearly independent) so that these sets can be chosen in parallel. 

\begin{definition}
Let $\eta>0$ be a constant and $\phi,\psi$ be functions of $n$. Given a bipartite graph $G=(A\cup B,E)$, we say that $J=\{s_1,\ldots,s_k\}\subseteq A$ is a $(\phi,\psi)$-approximate $\eta$-maximal nearly independent set ($\eta$-MaNIS) if
\begin{enumerate}
    \item Nearly Independent: for each index $i\in[k]$, we have $$|N(s_i)\backslash N(\{s_1,\ldots,s_{i-1}\})|\ge (1-4\eta)\cdot|N(s_i)|-\phi.$$
    \item Maximal: for all elements $a\in A\backslash J$, we have $$|N(a)\backslash N(J)|<(1-\eta)\cdot |N(a)|+\psi.$$
\end{enumerate}\label{def:MaNIS}
\end{definition}
\noindent
Now, we will give a differentially private algorithm which outputs an approximate MaNIS.

\begin{algorithm}[h]
\caption{Differentially Private MaNIS}
\label{alg:manis}
\textbf{Input:} Universe $\mathcal{U}$, set system $\mathcal{S}$, privacy parameter $\epsilon$, and multiplicative error $\eta$ \\
\textbf{Output:} $(O(\log^2{n}/\epsilon),O(\log^2{n}/\epsilon))$-approximate $\eta$-MaNIS with high probability \\
\begin{algorithmic}[1]
\STATE Initialize $G^{(0)}=(A^{(0)}\cup B^{(0)},E^{(0)})=(A\cup B,E)$
\STATE Set $\epsilon_0\leftarrow \epsilon/C_1\log(n)$.
\STATE For each $a\in A$, estimate the degree $\tilde{D}(a)=|N_{G^{(0)}}(a)|+\text{Lap}(2/\epsilon_0)$ via the Laplace mechanism.
\FOR{$t=0,\ldots,C_2\log{n}$}
\STATE For $a\in A^{(t)}$, randomly pick $x_a\in[0,1]$
\STATE For $b\in B$, let $\varphi^{(t)}(b)$ be $b$'s neighbor with maximum $x_a$.
\STATE Pick vertices $J^{(t)}=\{a\in A^{(t)}:\sum_{b\in B^{(t)}}\mathbbm{1}\{\varphi^{(t)}(b)=a\}+\text{Lap}(1/\epsilon_0)\ge (1-4\eta)\cdot\tilde{D}(a)\}$
\STATE 
\STATE $B^{(t+1)}=B^{(t)}\backslash N_{G^{(t)}}(J^{(t)})$
\STATE $A^{(t+1)}=\{a\in A^{(t)}\backslash J^{(t)}:|N_{G^{(t)}}(a)\cap B^{(t+1)}|+\text{Lap}(1/\epsilon_0)\ge (1-\eta)\cdot\tilde{D}(a)+\frac{6\log{n}}{(1-3\eta)\epsilon_0}\}$
\STATE $E^{(t+1)}=E^{(t)}\cap (A^{(t+1)}\times B^{(t+1)})$
\STATE Sort $J^{(t)}$ by their $x_a$ value.
\ENDFOR
\STATE Output the concatenation of $J^{(0)},\ldots,J^{(C_2\log{n})}$.
\end{algorithmic}
\end{algorithm}

In order to show that Algorithm \ref{alg:manis} outputs a MaNIS, we first need to show that after the for-loop in Lines 4--12 ends, we have $A^{(t)}=\emptyset$. This will imply that our algorithm indeed outputs an approximate MaNIS (see Theorem \ref{thm:approximate-manis}). Our proofs in this section are based on that of Theorem 3.4 in \cite{BPT11}, but need to be adapted due to the noise added for differential privacy. We begin with the following lemma.
\begin{lemma}
Let $\mathcal{E}$ denote the event that $|\text{Lap}(\beta)|\le 2\beta\log{n}$ for each $\text{Lap}(\beta)$ random variable, and condition on event $\mathcal{E}$.
Define the potential function $\Phi(t)=\sum_{a\in A^{(t)}}|N_{G^{(t)}}(a)|$ which counts the number of remaining edges incident on $A^{(t)}$. Then we have $\expected[\Phi(t+1)|\mathcal{E}]\le (1-c)\cdot\expected[\Phi(t)|\mathcal{E}]$ for $c=\eta^2(1-\eta)/4$.\label{lem:potential function}
\end{lemma}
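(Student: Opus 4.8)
\textbf{Step 1 (reduce to one round with fixed history).}
The plan is to first note that $\Phi(t)=|E^{(t)}|$, since summing the $A$-side degrees of a bipartite graph counts its edges. The event $\mathcal E$ splits as $\mathcal E=\mathcal E_{\le t}\cap\mathcal E_{>t}$, where $\mathcal E_{\le t}$ refers to the Laplace variables drawn in Lines 2--3 and in iterations $0,\dots,t$ and $\mathcal E_{>t}$ to the remaining ones; these are independent, and $\Phi(t),\Phi(t+1)$ are functions only of the randomness used through iteration $t$. Hence $\expected[\Phi(t+1)\mid\mathcal E]=\expected[\Phi(t+1)\mid\mathcal E_{\le t}]$ and $\expected[\Phi(t)\mid\mathcal E]=\expected[\Phi(t)\mid\mathcal E_{\le t-1}]$, so it suffices to show: for every outcome of the randomness through iteration $t-1$ consistent with $\mathcal E_{\le t-1}$ (which fixes $G^{(t)}$ and the estimates $\tilde D(\cdot)$), conditioned on the iteration-$t$ Laplace variables being bounded by $2\beta\log n$, the expected number of edges of $E^{(t)}$ deleted in iteration $t$ is at least $c\,|E^{(t)}|$. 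I would also record the two deterministic consequences of $\mathcal E$ used throughout: $|\tilde D(a)-|N_{G^{(0)}}(a)||\le 4\log n/\epsilon_0$ for all $a$, and every $\mathrm{Lap}(1/\epsilon_0)$ used in Lines 7 and 10 lies in $[-2\log n/\epsilon_0,\,2\log n/\epsilon_0]$.

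\textbf{Step 2 (the noiseless core, following \cite{BPT11}, Thm.~3.4).}
Fix $G^{(t)}$. An edge $(a,b)\in E^{(t)}$ survives the iteration only if both $a\in A^{(t+1)}$ and $b\in B^{(t+1)}$, so it is deleted whenever $a$ is dropped from $A$ (because $a\in J^{(t)}$, or $a$ fails the Line-10 test) or $b\in N_{G^{(t)}}(J^{(t)})$. To lower bound the expected number of deletions I would reveal the priorities $x_a$ in decreasing order, which turns the assignment $\varphi^{(t)}$ and the selection of $J^{(t)}$ into a sequential greedy process: the elements assigned to $a$ when $a$ is revealed are exactly those of its neighbors not yet adjacent to a previously revealed vertex. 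On this process I would reproduce the charging argument of \cite{BPT11}, Theorem~3.4, which shows that in expectation a constant fraction of the total $A$-degree $\sum_{a\in A^{(t)}}|N_{G^{(t)}}(a)|$ leaves $A^{(t)}$ in one round, using \emph{both} the selection rule (sets winning roughly a $(1-4\eta)$-fraction of their elements) and the degree-drop removal in Line~10 --- this joint use is essential, since the probability that an individual set is selected depends on the degree distribution of its elements and cannot be bounded below set-by-set. Tracking the constants through this argument is what is meant to produce $c=\eta^{2}(1-\eta)/4$.

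\textbf{Step 3 (absorbing the noise).}
Algorithm~\ref{alg:manis} differs from the idealized process only in that the tests in Lines 7 and 10 use $\tilde D(a)$ in place of the true degree, add a $\mathrm{Lap}(1/\epsilon_0)$ term, and Line~10 carries the extra additive slack $\tfrac{6\log n}{(1-3\eta)\epsilon_0}$; by Step 1 the first two perturbations have magnitude $O(\log n/\epsilon_0)$ under $\mathcal E$. I would handle them in two regimes. If $|N_{G^{(t)}}(a)|$ is below a threshold $\Theta(\log n/\epsilon_0)$ (the constant determined by the Line-10 slack), then the Line-10 test fails deterministically, so $a\notin A^{(t+1)}$ and all of $a$'s edges are deleted --- this is exactly the purpose of the $\tfrac{6\log n}{(1-3\eta)\epsilon_0}$ term, and such sets can only help. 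Otherwise the $O(\log n/\epsilon_0)$ perturbations shift the effective selection and survival thresholds by an amount that the argument is designed to tolerate, being absorbed into the gaps between the constants $1-4\eta$, $1-3\eta$, $1-\eta$ appearing in Definition~\ref{def:MaNIS} and in the algorithm (and into the $\phi=O(\log^{2}n/\epsilon)$ additive slack of the MaNIS), at the cost of a slightly smaller constant which one checks is still at least $\eta^{2}(1-\eta)/4$. Combining Steps 2 and 3 gives $\expected[\Phi(t+1)\mid\mathcal E_{\le t}]\le(1-c)\,\Phi(t)$ for each fixed history, and averaging over histories consistent with $\mathcal E_{\le t-1}$ yields the lemma.

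\textbf{Main obstacle.}
The crux is Step 2: faithfully carrying out the \cite{BPT11} charging argument for the one-round potential drop, and --- because that argument is global rather than per-set --- making it robust to the $O(\log n/\epsilon_0)$ additive noise of the private implementation without degrading the constant below $\eta^{2}(1-\eta)/4$. The bookkeeping in Step 3 for sets of intermediate degree (large enough to survive Line~10, yet small enough that the noise is not a negligible fraction of their degree) is the fiddly piece that has to be checked with care, since for those sets neither the ``deterministically removed'' nor the ``noise is an $\eta$-fraction'' simplification is immediately available.
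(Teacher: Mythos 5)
Your outline has the right skeleton (a per-round potential drop, the charging argument of \cite{BPT11}~Theorem~3.4, and noise absorbed by the slack built into the algorithm), but the two places where you commit to specifics diverge from what actually makes the proof work, and the part you defer is exactly the part that needs an idea. First, your claim that the one-round drop ``cannot be bounded below set-by-set'' and must be argued globally is not how the argument goes: the paper (following BPT11) defines $\Delta_a=\mathbbm{1}\{a\in J^{(t)}\}\sum_{b:\varphi^{(t)}(b)=a}\deg(b)$ and proves the \emph{per-set} bound $\expected[\Delta_a]\ge c\cdot\deg(a)$ for every $a\in A^{(t)}$, splitting into $\deg(a)<2/\eta$ (condition on $x_a$ being maximal over the two-hop neighborhood, which happens with probability at least $1/(\deg(a)\deg(b_{n'}))$ and forces $a$ to win its highest-degree neighbor) and $\deg(a)\ge 2/\eta$ (Claim~3.6 of \cite{BPT11}); summing over $a$ gives the lemma since the sets $\{b:\varphi^{(t)}(b)=a\}$ are disjoint. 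What cannot be bounded per set is the \emph{selection probability}, but the charged quantity is the degree mass won upon selection, and that can.

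Second, and more importantly, your Step~3 does not absorb the noise the way it needs to be absorbed, and you yourself flag the resulting hole (``intermediate-degree'' sets) as unresolved. The paper never splits on an absolute degree threshold and never degrades the constant: the point of the additive term $\frac{6\log n}{(1-3\eta)\epsilon_0}$ in Line~10 is that membership $a\in A^{(t)}$ certifies, under $\mathcal{E}$, that $\deg_{G^{(t)}}(a)\ge(1-\eta)\tilde D(a)+\frac{4\log n}{(1-3\eta)\epsilon_0}$. Consequently, on the event $\mathcal{E}_2$ of the BPT11 argument ($a$ wins at least a $(1-3\eta)$-fraction of its \emph{current} neighbors) one gets $|\mathrm{select}_a^{(t)}|\ge(1-3\eta)\deg_{G^{(t)}}(a)\ge(1-4\eta)\tilde D(a)+4\log n/\epsilon_0$, which clears the noisy Line-7 threshold \emph{deterministically} given $\mathcal{E}$ (and similarly in the small-degree case). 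So the $O(\log n/\epsilon_0)$ perturbations are eaten entirely by the relative slack between Lines~7 and~10 rather than by the gaps between $1-\eta,1-3\eta,1-4\eta$, Claim~3.6 then applies verbatim, and $c=\eta^2(1-\eta)/4$ is obtained exactly as in the non-private proof --- no fiddly intermediate regime arises. As written, your plan leaves the crux (why noise does not break selection for sets whose degree is comparable to $\log n/\epsilon_0$) to ``careful bookkeeping'' that your proposed mechanism does not actually provide.
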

\begin{proof}
For the entirety of the proof, we will keep the conditioning on $\mathcal{E}$ implicit in the notation for simplicity.
Fix an iteration $t\in\{0,\ldots,C_2\log{n}\}$. Let $\text{deg}(x)=\text{deg}_{G^{(t)}}(x)$ and define $$\Delta_a=\mathbbm{1}\{a\in J^{(t)}\}\textstyle\sum_{b:\varphi^{(t)}(b)=a}\text{deg}(b)$$ to be the sum of degrees of all neighbors of $a$ that are assigned to $a$ by $\varphi^{(t)}(\cdot)$ if $a\in J^{(t)}$, and $0$ otherwise.
Observe that $\sum_{a\in A^{(t)}}\Delta_a$ is a lower bound on the number of edges which are removed during this iteration:
$$\Phi(t+1)-\Phi(t)\ge \textstyle\sum_{a\in A^{(t)}}\Delta_a.$$
As a result, it suffices to show that $\mathbb{E}[\Delta_a]\ge c\cdot\text{deg}(a)$ for each $a\in A^{(t)}$.

Fix $a\in A^{(t)}$ and label the neighbors of $a$ $N_{G^{(t)}}(a)=\{b_1,\ldots,b_{n^\prime}\}$ such that $\text{deg}(b_1)\le\ldots\le\text{deg}(b_{n^\prime})$. We will split our proof into two cases depending on the size of the neighborhood of $a$ in $G^{(t)}$. First, let's consider the case where $n^\prime<2/\eta$. Let $\mathcal{E}_1$ denote the event where $x_a$ is the largest of all $x_{a^\prime}$ for $a^\prime\in N_{G^{(t)}}(N_{G^{(t)}}(a))$. Since $a$ has $n^\prime$ neighbors and each neighbor has at most $\text{deg}(b_{n^\prime})$ neighbors, this implies that $|N_{G^{(t)}}(N_{G^{(t)}}(a))|\le n^\prime\cdot \text{deg}(b_{n^\prime})$.
As a result, we have 
\begin{align}
    \Pr[\mathcal{E}_1]\ge{1}/[{n^\prime\cdot\text{deg}(b_{n^\prime})}].\label{eq:deg-bound}
\end{align}
Since $\mathcal{E}_1$ implies that $a\in J^{(t)}$ and $\varphi(b_{n^\prime})=a$, we can lower bound $\Delta_a$ as:
\begin{align*}
    \expected[\Delta_a]\ge \Pr[\mathcal{E}_1]\cdot\text{deg}(b_{n^\prime})\ge 1/n^\prime\ge c\cdot\text{deg}(a)
\end{align*}
where we have used (\ref{eq:deg-bound}) and the fact that $n^\prime=\text{deg}(a)<2/\eta$.

Now, consider the case where $n^\prime\ge 2/\eta$. Partition the neighbors of $a$ into high and low degree elements: let $p=\lfloor (1-\eta)\text{deg}(a)\rfloor$ and define $L(a)=\{b_1,\ldots,b_p\}$ to be the low-degree elements and $H(a)=\{b_{p+1},\ldots,b_{n^\prime}\}$ to be the high-degree elements. Let $\text{select}_a^{(t)}=\{b\in B^{(t)}:\varphi^{(t)}(b)=a\}$ be the set of elements in $B^{(t)}$ for which $\varphi^{(t)}$ select $a$ and $\mathcal{E}_2$ be the event that $|L(a)\backslash \text{select}_a^{(t)}|\le 2\eta|L(a)|$. We will use the following claim\footnote{Observe that the claim isn't affected by the noise we add for differential privacy, since the probability are taken over the randomness in $\varphi$. As a result, we can use the results directly. Even so, we reproduce the proof in the Appendix for completeness.} from \cite{BPT11} to complete the proof:
\begin{claim}[Claim 3.6 in \cite{BPT11}]
    The following are true:
    \begin{enumerate}[label=\roman*.]
        \item For $\gamma \le\eta/\mathrm{deg}(b_p)$, we have $\Pr[\mathcal{E}_2|x_a=1-\gamma]\ge1/2$.
        \item For $b\in H(a)$ and $\gamma\le\eta/\mathrm{deg}(b)$, we have $\Pr[\varphi^{(t)}(b)=a|\mathcal{E}_2,x_a=1-\gamma]\ge1-\eta$.
    \end{enumerate}
\end{claim}
Observe that the event $\mathcal{E}_2$ implies that $|\text{select}_a^{(t)}|\ge n'-\eta n'-2\eta n'=(1-3\eta)n'$. Since $a\in A^{(t)}$, we know from Line 10 of the algorithm that we have
$$n'+\text{Lap}(1/\epsilon_0)\ge (1-\eta)\tilde{D}(a)+\frac{6\log{n}}{(1-3\eta)\epsilon_0}.$$
Since we are conditioning on the fact that $|\text{Lap}(1/\epsilon_0)|\le 2\log{n}/\epsilon_0$, this implies that 
$$n'\ge (1-\eta)\tilde{D}(a)+\frac{4\log{n}}{(1-3\eta)\epsilon_0}.$$ 
In particular, this means that the event $\mathcal{E}_2$ implies that $|\text{select}_a^{(t)}|\ge (1-4\eta)\tilde{D}(a)+4\log{n}/\epsilon_0$.
But by definition in Line 7, this means that the event $\mathcal{E}_2$ implies that $a\in J^{(t)}$ (again since we are conditioning on the fact that $|\text{Lap}(1/\epsilon_0)|\le 2\log{n}/\epsilon_0$). As a result, we can apply the claim above to obtain
\begin{align*}
    \mathbb{E}[\Delta_a]&\ge\textstyle\sum_{b\in H(a)}\text{deg}(b)\Pr[\mathcal{E}_2\wedge\varphi^{(t)}(b)=a]\\
    &\ge \textstyle\sum_{b\in H(a)}\int_{\gamma=0}^{\eta/\text{deg}(b)}\text{deg}(b)\Pr[\mathcal{E}_2|x_a=1-\gamma]\Pr[\varphi^{(t)}(b)=a|\mathcal{E}_2,x_a=1-\gamma]\,d\gamma\\
    &\ge \textstyle\sum_{b\in H(a)}\eta\frac{1}{2}(1-\eta)\ge c\cdot\text{deg}(a),
\end{align*}
where the final inequality follows since $|H(a)|\ge\eta n^\prime\ge 1$.
\end{proof}

\begin{theorem}
\label{thm:approximate-manis}
For any constant $\eta>0$, we can choose constants $C_1,C_2$ so that Algorithm \ref{alg:manis} satisfies:
\begin{itemize}
    \item Algorithm \ref{alg:manis} outputs a  $(O(\log^2{n}/\epsilon),O(\log^2{n}/\epsilon))$-approximate $\eta$-MaNIS with probability $1-\tilde{O}(1/n^2)$.
    \item Algorithm \ref{alg:manis} is $\epsilon$-edge differentially private.
\end{itemize}
\end{theorem}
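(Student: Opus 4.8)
The plan is to prove the accuracy bullet and the privacy bullet separately.

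\textbf{Step 1: the for-loop empties $A$.} First I would fix the event $\mathcal{E}$ of Lemma \ref{lem:potential function}, noting that there are only $O(m\log n)$ Laplace draws in the whole algorithm (one per set in Line 3, one per set per round in Lines 7 and 10), so after tuning the noise-magnitude threshold to a large enough constant times $\log n$, we get $\Pr[\mathcal{E}]\ge 1-\tilde O(1/n^2)$. Conditioned on $\mathcal{E}$, I would iterate $\expected[\Phi(t+1)\mid\mathcal{E}]\le(1-c)\expected[\Phi(t)\mid\mathcal{E}]$ from that lemma, starting from $\Phi(0)=|E|\le mn$, to get $\expected[\Phi(C_2\log n+1)\mid\mathcal{E}]\le mn\cdot(1-c)^{C_2\log n}$; picking $C_2$ large (depending only on $c=\eta^2(1-\eta)/4$ and the polynomial bound on $m$) makes this $\tilde O(1/n^2)$, so Markov gives $\Phi(C_2\log n+1)=0$ with probability $1-\tilde O(1/n^2)$. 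A small separate check: conditioned on $\mathcal{E}$, the buffer $\tfrac{6\log n}{(1-3\eta)\epsilon_0}$ in Line 10 strictly dominates the added noise, so a set can never remain in some $A^{(t)}$ with zero current degree; hence $\Phi(C_2\log n+1)=0$ genuinely forces $A^{(C_2\log n+1)}=\emptyset$.

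\textbf{Step 2: the output is an approximate MaNIS.} Conditioning further on $A^{(C_2\log n+1)}=\emptyset$, I would read both MaNIS conditions directly off Lines 7 and 10. For near-independence: take $s_i\in J^{(t)}$ and note that the sets preceding it in the final concatenation are $J^{(0)}\cup\cdots\cup J^{(t-1)}$ together with the members of $J^{(t)}$ having larger $x$-value (reading the within-round sort of Line 12 as decreasing, which is the convention that makes the argument work). Every $b$ with $\varphi^{(t)}(b)=s_i$ lies in $B^{(t)}$, hence is uncovered by earlier rounds, and is not a neighbor of any earlier set of round $t$ (since $\varphi^{(t)}(b)=s_i$ means $s_i$ has the largest $x$ among $b$'s neighbors); therefore $|N(s_i)\setminus N(\{s_1,\dots,s_{i-1}\})|\ge\sum_{b\in B^{(t)}}\mathbbm 1\{\varphi^{(t)}(b)=s_i\}\ge(1-4\eta)\tilde D(s_i)-O(\log n/\epsilon_0)\ge(1-4\eta)|N(s_i)|-O(\log^2 n/\epsilon)$, using membership in $J^{(t)}$, the $\mathcal{E}$-bound on the Laplace noise, and $\tilde D(s_i)\ge|N(s_i)|-O(\log n/\epsilon_0)$. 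For maximality: since $A^{(C_2\log n+1)}=\emptyset$, every $a\notin J$ fails the Line 10 test at some round $t$, giving $|N_{G^{(t)}}(a)\cap B^{(t+1)}|<(1-\eta)\tilde D(a)+O(\log n/\epsilon_0)<(1-\eta)|N(a)|+O(\log^2 n/\epsilon)$; since $B^{(t+1)}=B\setminus N(J^{(0)}\cup\cdots\cup J^{(t)})\supseteq B\setminus N(J)$ and $N_{G^{(t)}}(a)\cap B^{(t+1)}=N(a)\cap B^{(t+1)}$, this bounds $|N(a)\setminus N(J)|$ as required. Since $\epsilon_0=\epsilon/(C_1\log n)$, both additive slacks are $O(\log^2 n/\epsilon)$.

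\textbf{Step 3: privacy.} I would view Algorithm \ref{alg:manis} as an adaptive composition of $O(\log n)$ Laplace mechanisms, followed by post-processing. These are the degree release of Line 3 (query $a\mapsto|N_G(a)|$ has $\ell_1$-sensitivity $1$ under an edge-neighbor, and the noise is $\mathrm{Lap}(2/\epsilon_0)$, so this step is $(\epsilon_0/2)$-edge-DP), and, for each round $t$, the threshold tests in Lines 7 and 10, each of which is a Laplace mechanism on a count / residual-degree vector composed with a deterministic threshold. Granting that each of these per-round queries has $O(1)$ sensitivity under an edge-neighbor change of the input, Lemmas \ref{lem:basic composition}, \ref{lem:post} and \ref{lem:laplace mechanism} give total privacy loss $O(\epsilon_0\log n)=O(\epsilon/C_1)$, which is at most $\epsilon$ once $C_1$ is chosen large; the concatenation and within-round sorting of Lines 12--13 are post-processing. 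To justify the per-round sensitivity bound, I would include the (noisy) sets $A^{(t)}$ in the revealed transcript, and prove by induction on $t$ that flipping a single input edge $(a_0,b_0)$ leaves the two executions' intermediate graphs $G^{(t)}$ differing only in a localized way — essentially the coverage status of the single element $b_0$ and the membership status of the single set $a_0$ — from which each round's query changes in $O(1)$ coordinates by $O(1)$.

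\textbf{Expected main obstacle.} The delicate point is exactly this last induction. Because the set $B^{(t)}$ of still-uncovered elements — and therefore each round's query — is itself a function of the private input, an edge flip could in principle cascade; the real content is choosing what to expose in the transcript so that the residual-degree query of Line 10 in later rounds still has bounded sensitivity, and arguing that the localized discrepancy cannot grow over the $O(\log n)$ rounds. This is where the low round-complexity and the ``local information only'' structure of the parallel MaNIS routine are essential, and it is the step I would expect to require the most care.
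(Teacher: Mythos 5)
Your utility argument (Steps 1 and 2) is essentially the paper's own proof: condition on the event that every Laplace draw has magnitude at most $2\beta\log n$, iterate Lemma \ref{lem:potential function} to drive $\expected[\Phi(t)]$ below an inverse polynomial so that $A^{(t)}=\emptyset$ after the $C_2\log n$ rounds (the paper likewise notes that under $\mathcal{E}$ the buffer term in Line 10 dominates the noise, so zero residual degree forces exclusion), and then read maximality off the failed Line 10 test and near-independence off the Line 7 test together with the observation that any $b$ with $\varphi^{(t)}(b)=s_i$ is uncovered by all sets preceding $s_i$ in the output order; your decreasing-$x_a$ reading of the within-round sort is indeed the convention the paper's argument needs.

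For privacy you also follow the paper's route — adaptive composition (Lemma \ref{lem:basic composition}) of the Laplace mechanisms in Lines 3, 7 and 10 (Lemma \ref{lem:laplace mechanism}), with everything else post-processing (Lemma \ref{lem:post}) — but you stop at ``granting that each per-round query has $O(1)$ sensitivity'' and defer the transcript induction. Be aware that the paper does not carry out that induction either: its proof simply asserts that the sensitivities are $2$ (Line 3) and $1$ (Lines 7 and 10) and concludes $\epsilon_0(2C_2\log n+1)$-DP, then chooses $C_1$. So you are not missing an idea that the paper supplies; in fact the cascade you flag is a genuine subtlety that the paper's one-line verification glosses over: if the flipped edge $(a_0,b_0)$ has $a_0$ selected into some $J^{(\tau)}$ (an event fixed by the common transcript), then the coverage status of $b_0$, and hence $B^{(t)}$ for $t>\tau$, can differ between the two neighboring runs, and the Line 10 residual-degree vector then differs by $1$ in one coordinate for every remaining set containing $b_0$ — an $\ell_1$ discrepancy that can be as large as $f$ rather than $1$. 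Making the per-round sensitivity claim rigorous (e.g., via the localization induction you sketch, or by an argument that the discrepancy cannot propagate) is precisely the step that needs care, and it is the one point where neither your write-up nor the paper's proof gives a complete argument.
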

\begin{proof}
In order to show that it outputs an approximate MaNIS, we will show that that $A^{(t+1)}=\emptyset$ with high probability at the end of the algorithm. Let $\mathcal{E}$ denote the event that $|\text{Lap}(\beta)|\le 2\beta\log{n}$ for each instance of a $\text{Lap}(\beta)$ random variable. Let us condition on event $\mathcal{E}$, which occurs with probability at least $1-\tilde{O}(1/n^2)$ by properties of the Laplace distribution and a union bound over the $O(\log{n})$ occurrences of the Laplace distribution. Importantly, we will only condition on $\mathcal{E}$ for the utility proof, and not for the privacy proof.

By Lemma \ref{lem:potential function}, we have that $\mathbb{E}[\Phi(t+1)]\le (1-c)\cdot \mathbb{E}[\Phi(t)]$ for each iteration $t$. This implies that after $C_2\log{n}\coloneqq -10\log_{1-c}(n)$ iterations, we have $\expected[\Phi(t)]\le 1/n^3$ so $\Phi(t)=0$ with probability at least $1-O(1/n^2)$; let us also condition on the event that $\Phi(t)=0$ for the remainder of the proof. This implies that $A^{(t+1)}=\emptyset$ since in the condition for $a\in A^{(t+1)}$ in Line 10, the left-hand side is less than $\frac{2\log{n}}{\epsilon_0}$ and the right hand side is greater than $\frac{2\log{n}}{\epsilon_0}$ since we are conditioning on $\mathcal{E}$. Thus, no elements $a\in A^{(t)}\backslash J^{(t)}$ are included in $A^{(t+1)}$ so $A^{(t+1)}=\emptyset$ at the end of the algorithm.

Now, we will show that the algorithm outputs an approximate MaNIS. For the maximality condition, observe that any $a\in A\backslash J$ was not included in $A^{(t+1)}$ in Line 10. This implies that $$|N_{G^{(t)}}(a)\cap B^{(t+1)}|<(1-\eta)\cdot \tilde{D}(a)+\frac{6\log{n}}{(1-3\eta)\epsilon_0}+\text{Lap}(1/\epsilon_0)\le (1-\eta)\cdot N(a)+\frac{12\log{n}}{(1-3\eta)\epsilon_0},$$
since we have conditioned on $\mathcal{E}$. But since $N(J)$ is a superset of the complement of $B^{(t+1)}$, we have
$$N(a)\backslash N(J)\subseteq N(a)\cap B^{(t+1)}=N_{G^{(t)}}(a)\cap B^{(t+1)}.$$
Combined with the previous inequality, this implies that
$$|N(a)\backslash N(J)|\le (1-\eta)\cdot N(a)+\frac{12\log{n}}{(1-3\eta)\epsilon_0},$$
giving our desired maximality condition. 

Next, we prove the independence condition. Fix an index $i\in[k]$. Since $s_i$ was included in the $J$, it was chosen in some $J^{(t)}$. By definition in Line 7, this implies that
$$\sum_{b\in B^{(t)}}\mathbbm{1}\{\varphi^{(t)}(b)=s_i\}\ge (1-4\eta)\cdot\tilde{D}(s_i)-\text{Lap}(1/\epsilon_0)\ge (1-4\eta)\cdot D(s_i)-\frac{6\log{n}}{\epsilon_0}.$$
Now, let $b\in B^{(t)}$ satisfy $\varphi^{(t)}(b)=s_i$. We have that $b$ is not covered by $s_j$'s chosen in previous iterations $\tau<t$, since $b\in B^{(t)}$. We furthermore have that $b$ is not covered by $s_j$'s from the current iteration $t$ with smaller ``$x_a$ value'' than $s_i$, since in that case $\varphi^{(t)}(b)=s_i$ would not be true. Hence we have that $b\not\in N(\{s_1,\ldots,s_{i-1}\})$, giving the inequality
$$|N(s_i)\backslash N(\{s_1,\ldots,s_{i-1}\})|\ge \sum_{b\in B^{(t)}}\mathbbm{1}\{\varphi^{(t)}(b)=s_i\}.$$ 
Combining the two inequalities gives us our desired independence condition.

Finally, we will prove the privacy guarantee. Observe that the private information is only used in Lines 3, 7, and 10 via the Laplace mechanism. It is easy to verify that the sensitivity of the queries is 2 in Line 3 and 1 in Lines 7 and 10, so each instance of the Laplace mechanism is $\epsilon_0$-differentially private (Lemma \ref{lem:laplace mechanism}). Consequently, the entire algorithm is $\epsilon_0\cdot (2C_2\cdot \log{n} +1)$-differentially private by adaptive basic composition (Lemma \ref{lem:basic composition}). Given the chosen constant $C_2$ in the previous paragraphs, we can set $C_1$ to be sufficiently large so that the entire algorithm satisfies $\epsilon$-differential privacy, as desired.
\end{proof}

\section{Differentially Private Max Cover and Set Cover}

The classical greedy algorithm for Max Coverage iteratively selects the set which covers the most additional elements. Since this approach is inherently sequential, it is difficult to adapt it to our setting while preserving pure differential privacy. We instead consider a parallel algorithm which maintains an approximate greedy guarantee while using only $\text{polylog}(n)$ parallel rounds to select the sets~\cite{BPT11}. Similar to the observations of~\cite{DBLP:conf/focs/DhulipalaLRSSY22,dhulipala2024nearoptimal}, we find that this non-private algorithm
being naturally parallel makes it more naturally amenable to modifications which guarantee differential privacy, as we will now show.

The algorithm proceeds in $O(\log{n})$ iterations. In each iteration, the algorithm identifies the collection of sets whose number of additional elements covered in the universe is within an $1+\eta$-multiplicative error of optimal (with $\text{polylog}(n)$ additive error due to privacy). Within this collection, the algorithm uses DP-MaNIS to choose sets to include in the solution. Intuitively, this makes sure that the chosen sets don't have too much overlap so that they still provide near-optimal coverage. The algorithm then outputs the concatenation of the sets which are chosen by DP-MaNIS at each iteration as the final solution. 

In fact, our algorithm will output an ordering of the sets in $\mathcal{S}$, using the ordered output of DP-MaNIS. For Max Cover, the solution is the first $k$ sets in the ordering. For Set Cover, we may use the same algorithm and output the full ordering as the implicit set cover solution. In the remainder of the section, we will show that the ordering obtains good utility guarantees for Max Cover and Set Cover.

\begin{algorithm}[h]
\caption{Differentially Private Max Cover and Set Cover}
\label{alg:k-set-cover}
\textbf{Input:} Universe $\mathcal{U}$, set system $\mathcal{S}$, privacy parameter $\epsilon>0$, and multiplicative error $\eta>0$.\\
\begin{algorithmic}[1]
\STATE $T\leftarrow \log_{1/(1-\eta)}(n), \mathcal{U}_T\leftarrow\mathcal{U}, \mathcal{S}_T\leftarrow\mathcal{S}, \epsilon_0\leftarrow \epsilon/2T$.
\FOR{$t=T,\ldots,1$}
\STATE Define the bucket of sets: $\mathcal{A}_t\leftarrow\{S\in\mathcal{S}_t: 1/(1-\eta)^{t-1}-C_1\log^2(n)/\epsilon_0\le |S\cap\mathcal{U}_t|+\text{Lap}(1/\epsilon_0)\le 1/(1-\eta)^t+C_1\log^2(n)/\epsilon_0\}$
\STATE Select a maximal nearly independent set from the bucket: $J_t\leftarrow\text{DP-MaNIS}_\eta(\mathcal{U}_t,\mathcal{A}_t)$
\STATE Remove elements covered by $J_t$ from the universe: $\mathcal{U}_{t-1}\leftarrow\mathcal{U}_t\backslash \bigcup_{S\in J_t}S$
\STATE Remove the chosen sets $J_t$ from consideration: $\mathcal{S}_{t-1}\leftarrow\mathcal{S}_t\backslash J_t$
\ENDFOR
\STATE Output the concatenation of $J_T,\ldots, J_1, \mathcal{S}_0$
\end{algorithmic}
\end{algorithm}

\subsection{Approximate Prefix Optimality}

We first need to introduce the notion of prefix optimality. In the standard greedy algorithm, the $i^{th}$ set chosen maximizes the number of additional elements covered. For the parallel version in~\cite{BPT11} which our algorithm is based on, they show their algorithm $\mathcal{A}$ satisfies a weaker property called prefix optimality (first introduced in~\cite{chierichetti2010max}) which is sufficient to prove an approximation guarantee for Max Cover and Set Cover. Suppose we fix a choice of the first $i-1$ sets. Given the respective choices of the $i^{th}$ set, let $\alpha_i$ denote the ratio of the number of new elements covered by $\mathcal{A}$ to the number covered by greedy algorithm. Letting $\overline{\alpha}_i=(1/i)\sum_{j\le i}\alpha_j$ be the average of $\alpha_1,\ldots,\alpha_i$, they showed that $\mathcal{A}$ is an $1-\exp(-\overline{\alpha}_k)$ approximation algorithm. Due to the additive error required in order to guarantee differential privacy, we further relax the prefix optimality condition to allow an additive error. 

\begin{definition} 
Let $i\in[n]$ be arbitrary and fix a choice of the first $i-1$ sets $S^{(1)},\ldots,S^{(i-1)}$. Let $S\in\mathcal{S}$ be the $i^{th}$ set chosen by our algorithm $\mathcal{A}$ and let $T\in\mathcal{S}$ be the set which the greedy algorithm would choose. We say that $\mathcal{A}$ is $(\alpha,\beta)$-prefix optimal if for some $\{\alpha_i\}_{i=1}^{n}$, we have $\overline{\alpha}_i\ge \alpha$ for all $i\in[n]$ and $|S-\bigcup_{j=1}^{i-1}S^{(i)}|\ge\alpha_i\cdot|T-\bigcup_{j=1}^{i-1}S^{(i)})|-\beta$ for every $i\in[n]$. 
\end{definition}

We now show that this additive relaxation of prefix optimality implies the same multiplicative approximation guarantee for the Max Cover problem with additional additive error.

\begin{lemma}
If $\mathcal{A}$ is $(\alpha,\beta)$-prefix optimal, then $\mathcal{A}$ is a $(1-\exp(-\bar{\alpha}_k),\beta k)$-approximation for Max Cover.\label{lem:prefix-optimal-max-cover}
\end{lemma}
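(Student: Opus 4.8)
The plan is to run the classical greedy analysis for Max Cover, carrying the additive slack $\beta$ through the recursion. Write $\mathrm{OPT}$ for the coverage of an optimal $k$-set solution, let $S^{(1)},\dots,S^{(k)}$ be the first $k$ sets that $\mathcal{A}$ outputs, and set $c_i=\big|\bigcup_{j\le i}S^{(j)}\big|$ with $c_0=0$. I will track the residual gap $g_i:=\mathrm{OPT}-c_i$; since $\mathcal{A}$ outputs only $k$ sets, $c_i\le c_k\le\mathrm{OPT}$ for every $i\le k$, so $g_i\ge0$ throughout, which will be needed to keep the recursion well behaved.

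The one content-bearing step is the per-step progress bound. Fixing the prefix $S^{(1)},\dots,S^{(i-1)}$ with union $P$, the $k$ sets of an optimal solution jointly cover a set $O$ with $|O\setminus P|\ge|O|-|P|\ge\mathrm{OPT}-c_{i-1}$; since $O\setminus P$ is the union of those $k$ sets' individual new contributions, some optimal set — hence also the greedy choice $T$, which maximizes $|S\setminus P|$ over all $S\in\mathcal{S}$ — satisfies $|T\setminus P|\ge(\mathrm{OPT}-c_{i-1})/k$. Feeding this into the definition of $(\alpha,\beta)$-prefix optimality (taking the coefficients $\alpha_i\in[0,1]$, which is legitimate because greedy is the single-step optimum, so no chosen set beats $T$), I obtain
\begin{equation*}
c_i-c_{i-1}=\Big|S^{(i)}\setminus P\Big|\ \ge\ \alpha_i\cdot\frac{\mathrm{OPT}-c_{i-1}}{k}-\beta,
\end{equation*}
which rearranges to the recursion $g_i\le\big(1-\tfrac{\alpha_i}{k}\big)g_{i-1}+\beta$.

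It then remains to unroll. Each factor $1-\alpha_i/k$ lies in $[0,1]$, so every partial product $\prod_{j=i+1}^{k}(1-\alpha_j/k)\le1$, and with $g_0=\mathrm{OPT}$,
\begin{equation*}
g_k\ \le\ \Big(\prod_{i=1}^{k}\big(1-\tfrac{\alpha_i}{k}\big)\Big)\mathrm{OPT}+\beta\sum_{i=1}^{k}\prod_{j=i+1}^{k}\big(1-\tfrac{\alpha_j}{k}\big)\ \le\ \Big(\prod_{i=1}^{k}\big(1-\tfrac{\alpha_i}{k}\big)\Big)\mathrm{OPT}+\beta k.
\end{equation*}
Applying $1-x\le e^{-x}$ termwise gives $\prod_{i=1}^{k}(1-\alpha_i/k)\le\exp\!\big(-\tfrac1k\sum_{i=1}^{k}\alpha_i\big)=\exp(-\bar{\alpha}_k)$, hence $g_k\le\exp(-\bar{\alpha}_k)\,\mathrm{OPT}+\beta k$, i.e. $c_k\ge(1-\exp(-\bar{\alpha}_k))\mathrm{OPT}-\beta k$ — exactly the claimed $(1-\exp(-\bar{\alpha}_k),\beta k)$-approximation. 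The only real subtleties are the averaging argument that greedy makes at least a $(\mathrm{OPT}-c_{i-1})/k$ amount of progress and keeping $g_i\ge0$ so that the accumulated additive error is honestly bounded by $\beta k$; the remaining manipulations are routine.
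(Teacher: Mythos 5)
Your proof is correct and follows essentially the same route as the paper: the same averaging argument lower-bounding the greedy set's marginal gain by $(\mathrm{OPT}-c_{i-1})/k$, the same per-step inequality from $(\alpha,\beta)$-prefix optimality, and the same recursion on the residual with the additive $\beta$ carried through to give total additive error $\beta k$. The only (cosmetic) difference is how the recursion is closed — you unroll it and apply $1-x\le e^{-x}$ termwise to get $\prod_i(1-\alpha_i/k)\le\exp(-\bar{\alpha}_k)$, while the paper argues by induction on the partial sums $\sum_{j\le i}\omega_j$ using the product inequality cited from Chierichetti et al. — and both yield the identical $(1-\exp(-\bar{\alpha}_k),\beta k)$ guarantee.
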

\begin{proof}
Let $\mathcal{S}=\{S^{(1)},\ldots,S^{(k)}\}$ be the first $k$ sets chosen by $\mathcal{A}$. Let $\mathcal{S}_*=\{S^{(1)}_*,\ldots,S^{(k)}_*\}$ be $k$ sets in an optimal solution (sorted in an arbitrary order). Let $\omega_i$ be the number of elements covered by set $S_i$ which were not covered by $S_1,\ldots,S_{i-1}$. Then the coverage of our algorithm is $\text{cov}(\mathcal{S})=\sum_{i=1}^kS^{(i)}\cdot\omega_i$, where $$\text{cov}(\mathcal{S})\coloneqq |\textstyle\bigcup_{i=1}^{k}S^{(i)}|.$$ 
For any $i\in[k]$, the number of elements covered by $\mathcal{S}_*$ but not covered by $S^{(1)},\ldots,S^{(i-1)}$ is at least $$\text{cov}(\mathcal{S}_*)-\textstyle\sum_{j=1}^{i-1}\omega_i.$$ This implies that there exists some $S\in\mathcal{S}_*$ not already chosen in $S^{(1)},\ldots,S^{(i-1)}$ which covers at least $$[{\text{cov}(\mathcal{S}_*)-\textstyle\sum_{j=1}^{i-1}\omega_i}]/k$$ elements. Since our algorithm chooses an $(\alpha_i,\beta)$-approximately greedy solution, we have $$\omega_i\ge\alpha_i\cdot [{\text{cov}(\mathcal{S}_*)-\textstyle\sum_{j=1}^{i-1}\omega_i}]/k-\beta.$$ We will now prove by induction that $$\textstyle\sum_{j=1}^{i}\omega_j\ge \left(1-\left(1-{\overline{\alpha}_i}/{k}\right)^i\right)\cdot\text{cov}(\mathcal{S}_*)-i\beta$$ for each $i\in[k]$. This directly implies our desired result since $(1-x/k)^k\le\exp(-x)$. The base case $i=1$ is obvious since $$\omega_1\ge\alpha_1\cdot\frac{\text{cov}(\mathcal{S}_*)}{k}-\beta=\overline{\alpha}_1\cdot\frac{\text{cov}(\mathcal{S}^*)}{k}-\beta.$$
Now suppose the claim holds for $i$. We will prove it for $i+1$:
\begin{align}
    \textstyle\sum_{j=1}^{i+1}\omega_j&=\omega_{i+1}+\textstyle\sum_{j=1}^{i}\omega_j\nonumber\\
    &\ge \alpha_{i+1}\cdot\frac{\text{cov}(\mathcal{S}_*)-\textstyle\sum_{j=1}^{i}\omega_j}{k}+\textstyle\sum_{j=1}^{i}\omega_j-\beta\label{eq:1}\\
    &= \alpha_{i+1}\cdot\frac{\text{cov}(\mathcal{S}_*)}{k}+ \left(1-\frac{\alpha_{i+1}}{k}\right)\cdot\textstyle\sum_{j=1}^{i}w_j-\beta\nonumber\\
    &\ge \alpha_{i+1}\cdot\frac{\text{cov}(\mathcal{S}_*)}{k}+ \left(1-\frac{\alpha_{i+1}}{k}\right)\cdot\left(1-\left(1-\frac{\overline{\alpha}_{i+1}}{k}\right)^{i+1}\right)\text{cov}(\mathcal{S}_*)-(i+1)\beta\label{eq:2}\\
    &\ge \left(1-\left(1-\frac{\overline{\alpha}_{i+1}}{k}\right)^{i+1}\right)\cdot\text{cov}(\mathcal{S}_*)-(i+1)\beta,\label{eq:3}
\end{align}
where inequality (\ref{eq:1}) follows by our bound on $\omega_{i+1}$, inequality (\ref{eq:2}) follows by the inductive hypothesis, and inequality (\ref{eq:3}) follows by the inequality $\left(1-\frac{\overline{\alpha}_i}{k}\right)^i\cdot\left(1-\frac{\alpha_{i+1}}{k}\right)\le\left(1-\frac{\overline{\alpha}_{i+1}}{k}\right)^{i+1}$ (see Lemma 3 in \cite{chierichetti2010max}) and rearranging terms.
\end{proof}

Next, we show that prefix optimality also implies a multiplicative approximation algorithm for Set Cover. 

\begin{lemma}
    If $\mathcal{A}$ is $(1-\eta,\beta)$-prefix optimal, then $\mathcal{A}$ is a $O(\beta+\log{n})$-approximation for implicit Set Cover.\label{lem:prefix-optimal-set-cover}
\end{lemma}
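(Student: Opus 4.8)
The plan is to bound the size of the implicit cover produced by $\mathcal{A}$ in terms of $\mathrm{OPT}$, the size of a minimum set cover, using the prefix-optimality hypothesis together with a ``multiplicative shrinking plus tail'' argument. Write $S^{(1)},S^{(2)},\dots$ for the ordering output by $\mathcal{A}$ (for Algorithm~\ref{alg:k-set-cover} the final block $\mathcal{S}_0$ consists of all sets not yet chosen, and since every element lies in some set, the full ordering covers everything). Let $n_i = \bigl|\bigcup_{S\in\mathcal{S}}S\setminus\bigcup_{j\le i}S^{(j)}\bigr|$ be the number of still-uncovered elements after the length-$i$ prefix, so $n_0=\bigl|\bigcup_{S\in\mathcal S}S\bigr|\le n$. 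First I would record the greedy recursion: at step $i$, those $\mathrm{OPT}$ sets of an optimal cover that have not yet been chosen still cover all $n_{i-1}$ uncovered elements, so some available set covers at least $n_{i-1}/\mathrm{OPT}$ of them; hence the set $T$ greedy would choose has $|T\setminus\bigcup_{j<i}S^{(j)}|\ge n_{i-1}/\mathrm{OPT}$, and by $(1-\eta,\beta)$-prefix optimality the chosen set satisfies $\omega_i:=|S^{(i)}\setminus\bigcup_{j<i}S^{(j)}|\ge \alpha_i\, n_{i-1}/\mathrm{OPT}-\beta$, where $\overline\alpha_\ell\ge 1-\eta$ for every $\ell$ and (since $\mathcal{A}$'s marginal gain never exceeds greedy's) each $\alpha_i\le 1$.

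Next, call step $i$ \emph{productive} if $\alpha_i\ge 1/2$ and \emph{unproductive} otherwise. Since $\sum_{i\le\ell}\alpha_i\ge(1-\eta)\ell$ while each $\alpha_i\le 1$, a counting argument shows that at most a $2\eta$ fraction of the first $\ell$ steps are unproductive, for every $\ell$. Let $\tau$ be the first step with $n_\tau<4\beta\cdot\mathrm{OPT}$. For a productive step $i\le\tau$ we have $n_{i-1}\ge 4\beta\cdot\mathrm{OPT}$, hence $\omega_i\ge \tfrac12 n_{i-1}/\mathrm{OPT}-\beta\ge n_{i-1}/(4\mathrm{OPT})$, i.e. $n_i\le n_{i-1}\bigl(1-1/(4\mathrm{OPT})\bigr)$; since $n$ is non-increasing, across the productive steps up to $\tau$ the count shrinks by this factor each time, so there are only $O(\mathrm{OPT}\log n)$ of them before $n$ drops below $4\beta\cdot\mathrm{OPT}$. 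Combining this with the bound that unproductive steps are at most a $2\eta<1$ fraction of $\{1,\dots,\tau\}$ yields $\tau=O(\mathrm{OPT}\log n)$, with the hidden constant depending on $\eta$.

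Finally I would count the implicit cover: a set belongs to it only if it is the first set in the ordering covering some element. At most $\tau=O(\mathrm{OPT}\log n)$ of $S^{(1)},\dots,S^{(\tau)}$ qualify trivially, and every later set in the implicit cover is the first coverer of an element still uncovered after step $\tau$, of which there are fewer than $4\beta\cdot\mathrm{OPT}$; hence at most $4\beta\cdot\mathrm{OPT}$ later sets qualify. Thus the implicit cover has size $O\bigl((\log n+\beta)\cdot\mathrm{OPT}\bigr)$, which is the claimed $O(\beta+\log n)$-approximation (the case $4\beta\cdot\mathrm{OPT}\ge n_0$ being trivial, as the whole instance is absorbed by the ``tail'').

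The step I expect to be the main obstacle is controlling the additive error $\beta$: one cannot iterate the recursion $n_i\le n_{i-1}(1-\alpha_i/\mathrm{OPT})+\beta$ all the way to $n_i=0$, since the accumulated additive term is $\Theta(\beta\ell)$ and a naive unrolling yields only an $O(\beta\log n)$-approximation; the fix is to stop the multiplicative-shrinking phase once $n_i=O(\beta\cdot\mathrm{OPT})$ and charge the remaining elements one set apiece. A secondary delicate point is justifying that the ``unproductive'' (small-$\alpha_i$) steps form only a bounded fraction of any prefix — this is precisely where the averaged guarantee $\overline\alpha_i\ge 1-\eta$ must be combined with the upper bound $\alpha_i\le 1$, and without that upper bound the statement would in fact be false.
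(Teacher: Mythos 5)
Your proof is correct, and its skeleton matches the paper's: split the ordering at the point where the residual instance becomes small (for you, when fewer than $4\beta\cdot\mathrm{OPT}$ elements remain; for the paper, when the maximum residual gain $L_i$ drops below $2\beta/(1-\eta)$, which gives $|R_i|\le \mathrm{OPT}\cdot L_i=O(\beta\,\mathrm{OPT})$); before the split each good step shrinks the uncovered count by a $(1-\Omega(1/\mathrm{OPT}))$ factor, contributing $O(\mathrm{OPT}\log n)$ sets, and after the split each surviving element is charged one first-covering set. The genuine difference is how the per-step gain is extracted from prefix optimality: the paper's proof asserts that \emph{every} chosen set covers at least $(1-\eta)L_i-\beta$ new elements, i.e.\ it uses a per-step bound $\alpha_i\ge 1-\eta$, which is what Algorithm \ref{alg:k-set-cover} actually delivers (Theorem \ref{thm:prefix-optimal} establishes inequality (\ref{eq:to-show}) for every $i$) but is formally stronger than the stated hypothesis, which only controls the averages $\overline{\alpha}_i$. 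You work from the averaged guarantee alone and add the Markov-type count that at most a $2\eta$ fraction of any prefix can have $\alpha_i<1/2$; this closes a literal-reading gap in the paper's argument, at the price of needing $\eta<1/2$ and the cap $\alpha_i\le 1$. That cap is the canonical choice (take $\alpha_i$ to be the actual ratio of marginal gains, which cannot exceed $1$ since greedy maximizes the marginal gain), though the paper's formal definition only says ``for some $\{\alpha_i\}$'' and does not force it; if you want to be fully airtight, note that the per-step inequality itself forces $\alpha_i\le 1+\beta/\bigl|T\setminus\bigcup_{j<i}S^{(j)}\bigr|\le 3/2$ whenever the greedy gain is at least $2\beta$, which holds throughout your multiplicative-shrinking phase, and your counting argument survives with this weaker cap and adjusted constants.
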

\begin{proof}
    As before, let $S^{(1)},\ldots,S^{(m)}$ denote the ordering of sets chosen by the algorithm $\mathcal{A}$. For each index $i\in[m]$, let $R_i=\{u\in\mathcal{U}:u\not\in\bigcup_{j=1}^{i-1}S^{(i)}$ denote the elements which remain uncovered and let $L_i=\max_{S\in\mathcal{S}}|S\cap R_i|$ denote the maximum number of elements a new set can cover. By definition of prefix optimality, we know that the $i^{th}$ set which is chosen by $\mathcal{A}$ covers at least $(1-\eta)\cdot L_i-\beta$ additional elements. For iterations where $L_i>2\beta/(1-\eta)$, this implies that $\mathcal{A}$ covers $(1-\eta)\cdot L_i/2$ elements. By the classical set cover argument, this implies that at most $O(\text{OPT}\cdot\log(n))$ sets are chosen in these iterations. When $L_i<2\beta/(1-\eta)$, observe that the number of remaining elements $|R_i|$ is at most $\text{OPT}\cdot L_i$. Any permutation will incur cost at most $\text{OPT}\cdot L_i=O(\text{OPT}\cdot\beta)$ for these elements. Combining our two bounds gives the desired result of a $O(\beta+\log(n))$-approximation algorithm for implicit Set Cover.
\end{proof}

\subsection{Privacy and Utility Guarantees}

Finally, we show the privacy and utility guarantees. For the privacy guarantees, we simply show that the algorithm is an (adaptive) composition of Laplace mechanisms, so it immediately satisfies differential privacy. For utility, we show that the algorithm is $(\alpha,\beta)$-prefix optimal for $\alpha=1-O(\eta)$ and $\beta=O(\log^3(n)/\epsilon)$.

\begin{theorem}
    Algorithm \ref{alg:k-set-cover} is $\epsilon$-edge differentially private.
\end{theorem}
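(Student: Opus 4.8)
The plan is to show that Algorithm~\ref{alg:k-set-cover} accesses the private set system $\mathcal{S}$ only through a bounded number of Laplace-mechanism calls, and then invoke basic composition (Lemma~\ref{lem:basic composition}) together with post-processing (Lemma~\ref{lem:post}). First I would enumerate every line of the algorithm that touches the private data. The only direct access is in Line~3, where for each set $S\in\mathcal{S}_t$ we release $|S\cap\mathcal{U}_t|+\mathrm{Lap}(1/\epsilon_0)$; the sets $\mathcal{U}_t,\mathcal{S}_t$ evolving across iterations are themselves post-processing of previous noisy outputs. The call to $\text{DP-MaNIS}_\eta$ in Line~4 is already $\epsilon$-edge differentially private by Theorem~\ref{thm:approximate-manis}, so I would treat it as a single private subroutine invocation rather than re-deriving its privacy.

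Next I would bound the $\ell_1$-sensitivity of the vector-valued query in Line~3. Under the edge-neighboring relation, $\mathcal{S}_1$ and $\mathcal{S}_2$ differ in exactly one set by exactly one element, so the vector $(|S\cap\mathcal{U}_t|)_{S\in\mathcal{S}_t}$ changes in at most one coordinate and by at most one; hence $\Delta_f\le 1$ and by Lemma~\ref{lem:laplace mechanism} the Line~3 release in a fixed iteration is $\epsilon_0$-edge differentially private. Then I would account for the total privacy budget: over the $T=\log_{1/(1-\eta)}(n)$ iterations of the for-loop, we have $T$ Laplace releases (Line~3) and $T$ invocations of DP-MaNIS (Line~4, each $\epsilon$-edge DP by Theorem~\ref{thm:approximate-manis}). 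A subtlety is the budget split: the algorithm sets $\epsilon_0\leftarrow\epsilon/2T$, so the $T$ Laplace steps consume $T\cdot\epsilon_0=\epsilon/2$; to make the accounting close I would also need each DP-MaNIS call to be run with parameter $\epsilon/(2T)$ rather than $\epsilon$ (a harmless reading of "$\text{DP-MaNIS}_\eta$" as carrying the ambient $\epsilon_0$ budget), so that the $T$ subroutine calls consume another $\epsilon/2$. Basic composition (Lemma~\ref{lem:basic composition}), applied adaptively since each iteration's inputs depend on earlier noisy outputs, then gives $\epsilon/2+\epsilon/2=\epsilon$-edge differential privacy. Finally, the concatenation in Line~8 and the formation of $\mathcal{U}_{t-1},\mathcal{S}_{t-1},\mathcal{A}_t$ in Lines~3,5,6 are deterministic (or internally-randomized) functions of the released noisy quantities, so by post-processing (Lemma~\ref{lem:post}) the full output remains $\epsilon$-edge differentially private.

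The main obstacle I anticipate is purely bookkeeping rather than conceptual: making the privacy budget split explicit and consistent. As written, $\epsilon_0$ is used both for the Line~3 Laplace noise and implicitly should govern the DP-MaNIS calls, but Theorem~\ref{thm:approximate-manis} states DP-MaNIS is "$\epsilon$-edge differentially private" with its own parameter; I would need to clarify that in Line~4 we call $\text{DP-MaNIS}$ with privacy parameter $\epsilon_0$ (giving $T\epsilon_0=\epsilon/2$ for all the subroutine calls combined), matching the $\epsilon/2T$ definition. With that reading pinned down, the argument is a two-line composition-plus-post-processing chain. A secondary point to state carefully is that composition here must be the \emph{adaptive} form, since $\mathcal{A}_{t}$, $\mathcal{U}_{t}$, and $\mathcal{S}_{t}$ at iteration $t$ are functions of the outputs of earlier iterations; basic composition is known to hold in the adaptive setting, so this does not change the bound.
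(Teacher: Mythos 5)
Your proposal is correct and follows essentially the same route as the paper's proof: identify Lines 3 and 4 as the only accesses to the private data, note the Line-3 query has sensitivity $1$ so each release is $\epsilon_0$-edge DP, treat each DP-MaNIS call as an $\epsilon_0$-edge DP subroutine (the reading the paper also adopts implicitly), and conclude by adaptive basic composition that the total cost is $2T\epsilon_0=\epsilon$. Your explicit flagging of the budget-split convention for DP-MaNIS is a fair clarification of a point the paper glosses over, but it does not change the argument.
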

\begin{proof}
    The private information is only used in Line 3 through the Laplace mechanism and in Line 4 through a call to our algorithm, DP-MaNIS. It is easy to verify that the sensitivity of the vector consisting of $|S\cap\mathcal{U}_t|$ for each $S\in\mathcal{S}_t$ is $1$, so Line 3 is $\epsilon_0$-edge differentially private by the privacy of the Laplace mechanism (Lemma \ref{lem:laplace mechanism}). Line 4 is also $\epsilon_0$-edge differentially private by the second bullet in Theorem \ref{thm:approximate-manis}. Thus, the entire algorithm is $2T\epsilon_0$-edge differentially private by adaptive basic composition (Lemma \ref{lem:basic composition}). By our choice of $\epsilon_0$ and $T$, this implies our desired guarantee.
\end{proof}

\begin{theorem}
    Algorithm \ref{alg:k-set-cover} is $(1-5\eta,O(\log^2{n}/\epsilon_0))$-prefix optimal with probability $1-\tilde{O}(1/n^2)$.\label{thm:prefix-optimal}
\end{theorem}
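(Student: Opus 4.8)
The plan is to condition on a single high-probability ``clean'' event and then verify the two requirements of prefix optimality separately --- the per-index additive greedy guarantee and the average lower bound $\bar\alpha_i\ge 1-5\eta$ --- by tracking how the maximum residual coverage of the surviving sets shrinks geometrically across the $T$ iterations. First I would let $\mathcal{E}$ be the event that every Laplace draw used in Line~3 has magnitude at most $2\log(n)/\epsilon_0$, intersected with the event that each of the $T$ calls to DP-MaNIS in Line~4 returns a genuine $(O(\log^2 n/\epsilon_0),O(\log^2 n/\epsilon_0))$-approximate $\eta$-MaNIS. By Theorem~\ref{thm:approximate-manis}, the Laplace tail bound, and a union bound over the $O(\log n)$ iterations, $\mathcal{E}$ holds with probability $1-\tilde{O}(1/n^2)$, and everything afterward is deterministic given $\mathcal{E}$.

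The technical core is an invariant: for every $t$, at the start of iteration $t$ every set $S\in\mathcal{S}_t$ satisfies $|S\cap\mathcal{U}_t|\le (1-\eta)^{-t}+\sigma$ for a fixed $\sigma=O(\log^2 n/\epsilon_0)$. I would prove this by backward induction on $t$. The base case is immediate since at the first iteration $|S\cap\mathcal{U}|=|S|\le|\mathcal{U}|=n\le(1-\eta)^{-t}$ by the choice of $T=\log_{1/(1-\eta)}(n)$. For the inductive step, a set chosen into $J_t$ leaves $\mathcal{S}_{t-1}$; a set $S\in\mathcal{A}_t\setminus J_t$ obeys the MaNIS maximality bound, so $|S\cap\mathcal{U}_{t-1}|=|N(S)\setminus N(J_t)|<(1-\eta)|S\cap\mathcal{U}_t|+\psi\le(1-\eta)^{-(t-1)}+(1-\eta)\sigma+\psi$; and a surviving set $S\notin\mathcal{A}_t$ has, by the induction hypothesis and the noise bound, noisy coverage below the bucket's \emph{upper} threshold, hence it must have fallen below the \emph{lower} threshold, giving $|S\cap\mathcal{U}_{t-1}|\le|S\cap\mathcal{U}_t|<(1-\eta)^{-(t-1)}$. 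Choosing $C_1$ in the bucket width large enough that $(1-\eta)\sigma+\psi\le\sigma$ and $\sigma+2\log(n)/\epsilon_0\le C_1\log^2(n)/\epsilon_0$ closes the induction \emph{without} $\sigma$ accumulating.

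With the invariant in hand I would feed the two bounds into the definition. Consider the $i$-th set $S^{(i)}$ emitted by the algorithm. If $S^{(i)}\in J_t$, then $S^{(i)}\in\mathcal{A}_t$ gives $|S^{(i)}\cap\mathcal{U}_t|\ge (1-\eta)^{-(t-1)}-O(\log^2 n/\epsilon_0)$, and the MaNIS near-independence of the iteration-$t$ call (whose internal output order is exactly the order in which Algorithm~\ref{alg:k-set-cover} emits the members of $J_t$, and whose ground set $\mathcal{U}_t$ already excludes everything covered by $J_T,\dots,J_{t+1}$) yields $\omega_i:=|S^{(i)}\setminus\bigcup_{j<i}S^{(j)}|\ge(1-4\eta)|S^{(i)}\cap\mathcal{U}_t|-O(\log^2 n/\epsilon_0)\ge(1-4\eta)(1-\eta)^{-(t-1)}-O(\log^2 n/\epsilon_0)$. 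Meanwhile every not-yet-emitted set lies in $\mathcal{S}_t$, so the invariant bounds the greedy set $T$ by $|T\setminus\bigcup_{j<i}S^{(j)}|\le|T\cap\mathcal{U}_t|\le(1-\eta)^{-t}+O(\log^2 n/\epsilon_0)$. Taking $\alpha_i=(1-4\eta)(1-\eta)=1-5\eta+4\eta^2$, the factor $(1-\eta)$ cancels the ratio $(1-\eta)^{-t}/(1-\eta)^{-(t-1)}$ and one gets $\omega_i\ge\alpha_i\,|T\setminus\bigcup_{j<i}S^{(j)}|-O(\log^2 n/\epsilon_0)$. For leftover indices with $S^{(i)}\in\mathcal{S}_0$, the invariant at $t=0$ forces $|T\setminus\bigcup_{j<i}S^{(j)}|=O(\log^2 n/\epsilon_0)$, so $\alpha_i=1$ works with $\omega_i\ge 0$. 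Since every $\alpha_i\ge 1-5\eta$, the running averages satisfy $\bar\alpha_i\ge 1-5\eta$ for all $i$, and $\beta=O(\log^2 n/\epsilon_0)$ --- exactly $(1-5\eta,O(\log^2 n/\epsilon_0))$-prefix optimality.

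The step I expect to be the main obstacle is the additive-error bookkeeping in the invariant --- specifically arguing that $\sigma$ stays \emph{constant} over all $T$ iterations instead of blowing up by a factor of $T$. The $(1-\eta)$ contraction coming from MaNIS maximality is what makes this possible: it lets the single-step error $\psi$ be absorbed geometrically, so $C_1$ only has to dominate $\psi$ and the Laplace tail, not their sum. A secondary subtlety is checking that the thresholds $(1-\eta)^{-(t-1)},(1-\eta)^{-t}$ really tile $[1,n]$ as $t$ ranges over $T,\dots,1$, so no surviving set with non-negligible residual coverage escapes every bucket, and carefully aligning the ordering used in the MaNIS near-independence property with the order in which Algorithm~\ref{alg:k-set-cover} outputs the sets within each $J_t$.
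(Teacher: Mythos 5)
Your proposal is correct and follows essentially the same route as the paper's proof: condition on the Laplace-tail and MaNIS events, establish the per-iteration invariant $|S\cap\mathcal{U}_t|\le (1-\eta)^{-t}+O(\log^2 n/\epsilon_0)$ by downward induction (bucket thresholds plus MaNIS maximality), lower-bound the chosen set's marginal via bucket membership and MaNIS near-independence, and combine the consecutive-threshold ratio $(1-\eta)$ with $(1-4\eta)$ to get $1-5\eta$. Your explicit geometric absorption of the additive error (so $\sigma$ does not grow with $T$) and your treatment of the leftover $\mathcal{S}_0$ sets are details the paper leaves implicit, but they do not change the argument.
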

\begin{proof}
    For the entirety of the proof, we will condition on the event $\mathcal{E}$ that all $\text{Lap}(\beta)$ random variables satisfy $|\text{Lap}(\beta)|\le 2\beta\log{n}$. As before, this event occurs with probability at least $1-\tilde{O}(1/n^2)$. 

    Fix an index $i\in[n]$ in the output, let $s_i$ denote the $i^{th}$ set of the output, and let $t$ denote the iteration when the set was chosen. Also let $s^*$ denote the set which greedy would choose. We wish to show that 
    \begin{align}
        \left|s_i\backslash \textstyle\bigcup_{j=1}^{i-1}s_j\right|\ge (1-5\eta)\cdot\left|s^*\backslash\textstyle\bigcup_{j=1}^{i-1}s_j\right|-O(\log^3{n}/\epsilon).\label{eq:to-show}
    \end{align}
    We claim that at iteration $t$, all remaining sets $s\in\mathcal{S}_t$ have utility 
    \begin{align}
        |s\cap\mathcal{U}_t|\le 1/(1-\eta)^t+O(\log^2(n)/\epsilon_0).  \label{eq:utility-bound}  
    \end{align}
    We prove this claim inductively on $t$, going downwards. At iteration $T$, this is trivially true since all sets have size at most $n$. Assume the claim holds at iteration $t+1$. Recall that the set $\mathcal{S}_t$ consists of those sets $S\in \mathcal{S}_{t+1}$ which were not chosen in $\mathcal{A}_{t+1}$, or were chosen in $\mathcal{A}_{t+1}$ but not chosen in $J_{t+1}$. We know that all sets $s\in\mathcal{S}_{t+1}$ satisfying $$|s\cap\mathcal{U}_{t+1}|\ge 1/(1-\eta)^t-C_1\log^2(n)/\epsilon_0$$ are included in $\mathcal{A}_{t+1}$. This implies that if set $s\in\mathcal{S}_{t+1}$ isn't chosen in $\mathcal{A}_{t+1}$, it must satisfy Inequality \ref{eq:utility-bound}. 
    
    Now, we consider the sets $s$ chosen for $\mathcal{A}_{t+1}$ but not chosen for $J_{t+1}$. Since we apply DP-MaNIS on $\mathcal{A}_{t+1}$ to obtain $J_{t+1}$, we know that each set in $\mathcal{A}_{t+1}$ which isn't included in $J_{t+1}$ has decreased their utility by a factor of $1-\eta$. Specifically, the second part of Definition \ref{def:MaNIS} guarantees that $s\in\mathcal{A}_{t+1}\backslash J_{t+1}$ satisfies
    $$|s\cap\mathcal{U}_{t}|\le (1-\eta)\cdot|s\cap\mathcal{U}_{t+1}|+O(\log^2(n)/\epsilon_0)\le1/(1-\eta)^t+O(\log^2(n)/\epsilon_0),$$
    where the second inequality follows by the inductive hypothesis. That completes the inductive proof.

    Next, we need to obtain a lower bound on the utility of the chosen $s_i$. We claim that $s_i$ satisfies 
    \begin{align}
        \left|s_i\backslash\textstyle\bigcup_{j=1}^{i-1}s_j\right|\ge (1-4\eta)\cdot 1/(1-\eta)^{t-1}-O(\log^2(n)/\epsilon_0).\label{eq:utility-lower-bound}
    \end{align}
    Observe that since $s_i$ was chosen in iteration $t$, we have that $s_i\in\mathcal{A}_t$ implying that $$|s_i\cap\mathcal{U}_t|\ge 1/(1-\eta)^{t-1}-O(\log^2(n)/\epsilon_0).$$
    Furthermore, since $s_i$ was chosen using DP-MaNIS, we know that the chosen sets are nearly independent so $|s_i\backslash \bigcup_{j=1}^{i-1}s_j|$ and $|s_i\cap\mathcal{U}_t|$ are similar size. Specifically, the first part of Definition \ref{def:MaNIS} guarantees that 
    $$\left|s_i\backslash \textstyle\bigcup_{j=1}^{i-1}s_j\right|\ge (1-4\eta)\cdot |s_i\cap\mathcal{U}_t|-O(\log^2(n)/\epsilon_0).$$
    Combining the above two inequalities gives us inequality (\ref{eq:utility-lower-bound}), as desired.
    
    Finally, we combine the inequalities (\ref{eq:utility-bound}) and (\ref{eq:utility-lower-bound}) to conclude our desired result of (\ref{eq:to-show}). In particular, inequality (\ref{eq:utility-bound}) applies to all $s\in\mathcal{S}_t$ so it also applies to the set $s^*$ chosen by greedy. Also observe that that $\mathcal{U}\backslash\bigcup_{j=1}^{i-1}s_j\supseteq\mathcal{U}_t$ since $\mathcal{U}_t$ is the set of elements remaining in $\mathcal{U}$ before the iteration starts. Combining the two inequalities and then applying our observation gives us
    $$\left|s_i\backslash \textstyle\bigcup_{j=1}^{i-1}s_j\right|\ge (1-5\eta)\cdot|s^*\cap\mathcal{U}_t|-O(\log^2(n)/\epsilon_0)\ge (1-5\eta)\cdot \left|s^*\backslash\textstyle\bigcup_{j=1}^{i-1}s_j\right|,$$
    proving the desired claim. 
\end{proof}

\noindent
Finally, the proofs of Theorems \ref{thm:intro-max-cover} and \ref{thm:intro-set-cover} follow by combining Theorem \ref{thm:prefix-optimal} with Lemmas \ref{lem:prefix-optimal-max-cover} and \ref{lem:prefix-optimal-set-cover}. 

\section{Lower Bounds}

We give a lower bound showing that the additive error given in our algorithm is near-optimal for $\epsilon$-edge differentially private algorithms. The lower bound is inspired by the one given in~\cite{GuptaLMRT10} for max coverage under node differential privacy.

\begin{theorem}
    Any $\epsilon$-edge differentially private algorithm for max coverage incurs $\Omega(k\log(n/k)/\epsilon)$ additive error. 
\end{theorem}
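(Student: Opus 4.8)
The plan is a packing argument powered by the group-privacy property (Lemma~\ref{lem:group}), following the template of the node-privacy lower bound of~\cite{GuptaLMRT10}. Suppose toward a contradiction that some $\epsilon$-edge differentially private $M$ has additive error below a target $\alpha$ with probability $\ge \tfrac23$ on every instance. I would reduce the task to building a family of Max Cover instances $\{I_\sigma\}_{\sigma\in\mathcal{C}}$, all within edge-distance $D$ of one reference instance $I_{\sigma^{(0)}}$, together with pairwise-disjoint ``success sets'' $G_\sigma$ of outputs with the property that any algorithm achieving additive error $<\alpha$ on $I_\sigma$ lands in $G_\sigma$ with probability $\ge\tfrac23$. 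Group privacy then gives $\Pr[M(I_{\sigma^{(0)}})\in G_\sigma]\ge e^{-D\epsilon}\Pr[M(I_\sigma)\in G_\sigma]\ge\tfrac23 e^{-D\epsilon}$, and summing the disjoint events over $\sigma\in\mathcal{C}$ forces $|\mathcal{C}|\le\tfrac32 e^{D\epsilon}$. So the job is to produce such a family with $\log|\mathcal{C}|=\Omega(k\log(n/k))$, with $D\epsilon$ bounded by a small constant times $k\log(n/k)$, and with $\alpha=\Omega(k\log(n/k)/\epsilon)$; then the displayed inequality fails, giving the contradiction.

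For the construction, partition $\mathcal{U}$ into $k$ element-disjoint \emph{gadgets}, each on $\Theta(n/k)$ elements. Gadget $i$ contains singleton sets $\{e_{i,1}\},\dots,\{e_{i,q_0}\}$ with $q_0=\Theta(n/k)$, together with a fixed ``reward block'' $R_i$ of $W:=\lfloor c_W\log(n/k)/\epsilon\rfloor$ extra elements, where $c_W$ is a small constant fixed at the end (this needs $\epsilon=\Omega(k\log(n/k)/n)$, which we may assume since otherwise $k\log(n/k)/\epsilon$ already exceeds $n$). For $\sigma=(\sigma_1,\dots,\sigma_k)\in[q_0]^k$, the instance $I_\sigma$ replaces the singleton $\{e_{i,\sigma_i}\}$ by the \emph{augmented set} $\{e_{i,\sigma_i}\}\cup R_i$ in every gadget $i$, keeping all other sets as plain singletons. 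Since the gadgets are disjoint and the budget is exactly $k$, the unique optimum of $I_\sigma$ picks the augmented set in each gadget and covers $k(W+1)$ elements; a short accounting --- the one delicate point being a solution that spends two of its $k$ sets in one gadget and none in another --- shows any solution that includes the augmented set in only $g$ of the gadgets covers at most $gW+k$, hence has additive error $\ge(k-g)W$. Moving a reward block within one gadget costs $2W$ edge-edits, so $I_\sigma$ and $I_{\sigma'}$ lie at edge-distance $2W\cdot|\{i:\sigma_i\ne\sigma_i'\}|\le 2kW$.

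Now take $\mathcal{C}\subseteq[q_0]^k$ to be a code of minimum Hamming distance $\ge k/2$ with $\log|\mathcal{C}|=\Omega(k\log q_0)=\Omega(k\log(n/k))$, which exists by the Gilbert--Varshamov bound once $n/k$ exceeds an absolute constant. Set $\alpha:=(k/8)W$ and let $G_\sigma$ be the set of outputs that include the augmented set of at least $7k/8$ gadgets of $I_\sigma$; by the coverage bound, additive error $<\alpha$ on $I_\sigma$ forces the output into $G_\sigma$. The $G_\sigma$ are pairwise disjoint: if an output lay in $G_\sigma\cap G_{\sigma'}$ then --- again using that at most $k/8$ gadgets can receive two or more of the $k$ chosen sets --- in at least $5k/8$ gadgets it would pick a set that is simultaneously the augmented set of $I_\sigma$ and of $I_{\sigma'}$, forcing $\sigma_i=\sigma_i'$ there and contradicting the minimum distance of $\mathcal{C}$. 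With $D=2kW$ the packing inequality reads $|\mathcal{C}|\le\tfrac32 e^{2kW\epsilon}$; since $2kW\epsilon\le 2c_W k\log(n/k)$ while $\log|\mathcal{C}|\ge c_0 k\log(n/k)$ for a fixed $c_0>0$, choosing $c_W<c_0/4$ gives the contradiction. Hence every $\epsilon$-edge differentially private algorithm incurs additive error $\ge\alpha=\Omega(k\log(n/k)/\epsilon)$ on some instance.

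The main obstacle is extracting \emph{both} the factor $k$ and the $1/\epsilon$ from a construction whose three parameters pull against one another. A naive product family $\mathcal{C}=[q_0]^k$ yields only $\Omega(\log(n/k)/\epsilon)$: with the fine-grained choice $G_\sigma=\{\text{the optimum}\}$ one can afford only $\alpha<W$, and that fine-grained choice is all that is disjoint across a product family. Recovering the factor $k$ demands the coarse ``$\ge 7k/8$ gadgets correct'' success set, whose disjointness is exactly what the error-correcting-code structure buys. Simultaneously, $W$ must scale like $\log(n/k)/\epsilon$ and no larger: inflating it raises the coverage gap $\alpha$ but also the edge-distance $2kW$, after which the group-privacy blow-up $e^{2kW\epsilon}$ overwhelms $|\mathcal{C}|$. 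Verifying that one admissible choice of constants $(c_W,c_0)$ threads all of these constraints, and disposing of the degenerate ranges of $n/k$ and $\epsilon$ (where a two-points-per-gadget argument gives the weaker $\Omega(k/\epsilon)$ bound, or the statement is vacuous), is the substantive work; the coverage arithmetic, the existence of $\mathcal{C}$, and the disjointness check are routine.
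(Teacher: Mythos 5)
Your argument is correct in its essentials, but it takes a genuinely different route from the paper. The paper's proof is a two-line reduction: it observes that the hard instances in the $\Omega(k\log(n/k)/\epsilon)$ node-privacy lower bound of \cite{GuptaLMRT10} have every element in exactly one set (i.e., $f=1$), and that by Lemma \ref{lem:equivalent-f} an $\epsilon$-edge differentially private mechanism yields an $\epsilon$-node differentially private mechanism on such instances, so the known lower bound transfers verbatim to the edge model. You instead rebuild the packing argument from scratch inside the edge model: disjoint gadgets with a movable reward block of size $\Theta(\log(n/k)/\epsilon)$, a Gilbert--Varshamov code over $[q_0]^k$ to make the coarse ``$\ge 7k/8$ gadgets correct'' success sets pairwise disjoint, and group privacy over edge-distance $2kW$ to cap the number of distinguishable instances. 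Your coverage accounting, the disjointness argument (agreements $\ge 3k/4 - k/8 = 5k/8 > k - k/2$), and the distance/entropy trade-off $2c_Wk\log(n/k)$ vs.\ $c_0k\log(n/k)$ all check out; note that your hard instances also have $f=1$, so you have effectively re-derived the \cite{GuptaLMRT10} construction under edge-neighbors rather than citing it. What the paper's route buys is brevity and reuse of a published theorem; what yours buys is a self-contained proof that does not depend on the internals of that theorem, at the cost of carrying the constant bookkeeping yourself (in particular $W\ge 1$ forces $\epsilon \lesssim c_W\log(n/k)$ and $n/k$ above a constant, and your fallback ``two-points-per-gadget'' argument is exactly what covers the complementary regime where $\log(n/k)=O(1)$ and $\Omega(k/\epsilon)$ is already the claimed bound). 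One caveat worth stating explicitly if you write this up: as phrased, your contradiction is against algorithms with no multiplicative loss (output within additive $\alpha$ of OPT); allowing a constant multiplicative factor would shrink the forced fraction of correct gadgets and require retuning the code distance, and for factors below roughly $2/3$ the disjointness bookkeeping as written breaks down, which matches the implicit scope of the cited node-privacy lower bound.
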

\begin{proof}
    Observe that the lower bound of $\Omega(k\log(n/k)/\epsilon)$ for node-private Max Cover (Theorem 4.5) in~\cite{GuptaLMRT10} is for instances where each element is only in a single set. For such instances, any $\epsilon$-edge differentially private mechanism can be converted to an $\epsilon$-node differentially private mechanism (see Lemma \ref{lem:equivalent-f}). Hence, the lower bound of $\Omega(k\log(n/k)/\epsilon)$ also applies to edge-private Max Cover.
\end{proof}

\bibliographystyle{alpha}
\bibliography{refs.bib}

\end{document}